\newcommand{\poly}{\ensuremath{\text{poly}}}
\newcommand{\vft}{\ensuremath{V_{FT}}}
\newcommand{\vc}{\ensuremath{V_C}}
\newcommand{\chp}{tight}
\newcommand{\Chp}{Tight}
\newcommand{\hcpc}{HCPP(c)}
\newcommand{\hcpcl}{HCPP(c,l)}
\newcommand{\hcpl}{HCPP(l)}
\newcommand{\hcpp}{HCPP}
\newcommand{\algA}{\ensuremath{\mathcal A}}
\newcommand{\stRPPlong}{$s$-$t$-Rural Postman Path Problem}
\newcommand{\stRPP}{$s$-$t$-RPP}
\newcommand{\rpptime}{\tau}
\newcommand{\RPP}{RPP}
\newcommand{\stTSP}{$s$-$t$-TSP}
\newcommand{\Gpath}{layer path}
\newcommand{\prt}{\ensuremath{\mathcal{P}}}
\tikzset{
    enull/.style = {densely dotted},
    edge/.style = {thick},
    matching/.style = {ultra thick},
    tour/.style = {->, thick},
    subset/.style = {line width=0.6cm,rounded corners,draw=white!85!black,fill=white!85!black}
}
\theoremstyle{definition}
\newtheorem{theorem}{Theorem}[section]
\newtheorem{observation}[theorem]{Observation}
\newtheorem{corollary}[theorem]{Corollary}
\newtheorem{construction}[theorem]{Construction}
\newtheorem{lemma}[theorem]{Lemma}
\newtheorem{problem}[theorem]{Problem}
\newtheorem{definition}[theorem]{Definition}
\newtheorem{proposition}[theorem]{Proposition}
\newtheoremstyle{iostuff}%
{0pt}%
{0pt}%
{\hangindent=\parindent}%
{}%
{\itshape}%
{:}%
{.5em}%
{}%
\theoremstyle{iostuff}
\newtheorem*{probinstance}{Input}
\newtheorem*{probfind}{Find}
\newtheorem*{probquest}{Question}
\crefname{construction}{Construction}{Constructions}
\crefname{paragraph}{Section}{Sections}
\crefname{lemma}{Lemma}{Lemmas}
\Crefname{lemma}{Lem.}{Lem.}
\crefname{theorem}{Theorem}{Theorems}
\Crefname{theorem}{Thm.}{Thm.}
\crefname{proposition}{Proposition}{Propositions}
\Crefname{proposition}{Prop.}{Props.}
\crefname{remark}{Remark}{Remarks}
\Crefname{remark}{Rem.}{Rem.}
\crefname{prop}{Property}{Properties}
\crefname{problem}{Problem}{Problems}
\crefname{observation}{Observation}{Observations}
\crefname{corollary}{Corollary}{Corollaries}
\Crefname{corollary}{Cor.}{Cors.}
\crefname{line}{line}{lines}
\crefname{section}{Section}{Sections}
\Crefname{section}{Sec.}{Secs.}
\newcommand{\ord}{\prec}
\newcommand{\N}{\mathbb N}
\newcommand{\w}{\omega}
\newcommand{\wG}{\w_{\Gamma}}
\newcommand{\wtG}{\widetilde \w_{\Gamma}}
\newcommand{\mul}[1]{\ensuremath{\mu_{#1}}}
\newcommand{\sg}[2]{\ensuremath{#1\langle #2\rangle}}
\begin{document}
\begin{frontmatter}
  \title{The Hierarchical Chinese
    Postman Problem:
    the slightest disorder makes it hard,
    \\
    yet
    disconnectedness is manageable} 
  \date{}
  \author[mmf]{Vsevolod A.\ Afanasev}
  \ead{v.afanasev3@g.nsu.ru}
  \author[mmf]{René van Bevern\corref{cor}}
  \ead{rvb@nsu.ru}
  \cortext[cor]{Correspondence to: Deptartment of Mechanics and Mathematics, Novosibirsk State University, ul.\ Pirogova 1, Novosibirsk, 630090, Russian Federation, \texttt{rvb@nsu.ru}}

  \author[mmf,im]{Oxana Yu.\ Tsidulko}
  \ead{o.tsidulko@nsu.ru}
  \address[mmf]{Department of Mechanics and Mathematics,
    Novosibirsk State University,
    Novosibirsk, Russian Federation}
  \address[im]{Sobolev Institute of Mathematics of the Siberian Branch of the
    Russian Academy of Sciences,
    Novosibirsk, Russian Federation}
  
  \begin{abstract}
    The Hierarchical Chinese Postman Problem is
    finding a shortest traversal of all edges of a graph
    respecting precedence constraints
    given by a partial order on classes of edges.
    We show that the special case
    with connected classes is NP-hard
    even on orders
    decomposable into a chain and an incomparable class.
    For the case with linearly ordered
    (possibly disconnected) classes,
    we get 5/3\hyp approximations and
    fixed\hyp parameter algorithms
    by transferring results
    from the Rural Postman Problem.
  \end{abstract}
  \begin{keyword}
    approximation algorithm\sep
    fixed\hyp parameter algorithm\sep
    NP-hardness\sep
    arc routing\sep
    rural postman problem\sep
    temporal graphs
  \end{keyword}
\end{frontmatter}

\thispagestyle{empty}

\section{Introduction}
\label{intro}
\noindent
The following NP\hyp hard arc routing problem
arises in snow plowing,
garbage collection,
flame and laser cutting \citep{DST87,Lap15}.

\begin{problem}[Hierarchical Chinese Postman Problem, \hcpp{}]
  \label[problem]{prob:HCPP}
  \begin{probinstance}
    An undirected graph~$G=(V,E)$,
    edge weights~$\w\colon E\to \N$,
    a partition~$\prt$ of~$E$ into $k$~\emph{classes},
    a partial order~$\ord$ on~$\prt$.
  \end{probinstance}
  \begin{probfind}
    A least\hyp weight closed walk
    traversing each edge in~$E$ at least once
    such that each edge~$e$ in a class~$E'$
    is traversed only after all edges in all classes~$E''\ord E'$
    are traversed.
  \end{probfind}
\end{problem}

\noindent
The case $k=1$
is the \emph{Chinese Postman Problem (CPP)},
which
reduces to a %
minimum\hyp weight
perfect matching problem \citep{Chr73,EJ73,Ser74,BS20b}.
We study the
following special cases of \hcpp:

\medskip
\begin{compactitem}[\hcpcl:]
\item[\hcpl:]  the order $\ord$ is linear,
\item[\hcpc:] each edge class induces a connected subgraph,
\item[\hcpcl:] both of the above restrictions.
\end{compactitem}

\medskip\noindent
\hcpl{} and \hcpcl{}
can also be understood
as variants of the Travelling Salesman Problem (TSP)
in temporal graphs \citep{MS16b},
with the difference
that it is required 
to explore all edges instead of all vertices
and that edges never disappear from the graph.
\hcpcl{} is polynomial\hyp time solvable
\citep{DST87,GI00,KV06}.
This naturally raises two questions
about \hcpc{} and \hcpl{}:

\begin{inparaenum}[(a)]
\item\label{qwidth} Is \hcpc{} effectively solvable
  on other order types,
  like several scheduling
  problems on, for example,
  tree orders~\cite{Hu61},
  interval orders~\cite{PY79},
  and
  bounded\hyp width orders~\citep{Ser00,BBB+16}?

\item\label{qtype} Is \hcpl{}
  effectively solvable
  when the number of connected components
  in each class is sufficiently small?
  If the number of connected components
  is unbounded,
  \hcpl{} is NP\hyp hard already
  for~$k=2$ \citep{BNSW15}.  
\end{inparaenum}

\paragraph{Our contributions}
In \cref{sec:NPhard},
we show that \hcpc{} is NP\hyp hard
even %
on partial orders
that are decomposable into a linear order and
a class that is incomparable to all others classes,
thus negatively answering~\eqref{qwidth}
for all order types mentioned there.
The remaining sections
are dedicated to question~\eqref{qtype}.

In \cref{sec:RPP},
we revisit a construction
that reduces
\hcpcl{}
to the \stRPPlong{} (\stRPP, \cref{prob:strpp})~\citep{DST87}.
We show that,
when applied to \hcpl{},
the construction
transfers
performance guarantees
of approximation and randomized algorithms
from \stRPP{} to \hcpl{}.

In \cref{sec:apx},
we show a $5/3$\hyp approximation algorithm
for \hcpl{}.
This contrasts TSP in temporal graphs,
which is not better than 2\hyp approximable
unless P${}={}$NP \citep{MS16b}.
To get $5/3$\hyp approximations for \hcpl{},
we use
the construction from \cref{sec:RPP}
and show a $5/3$\hyp approximation algorithm for \stRPP{}
analogously to that for $s$-$t$-TSP \citep{Hoo91}.
Any better approximation factor for \stRPP{}
will directly carry over to \hcpl{}.

\looseness=-1
In \cref{sec:fpt},
we use the construction
from \cref{sec:RPP}
and known algorithms for the Rural Postman Problem \citep{EGL95,BNSW15,GWY17}
to show
that \hcpl{} is solvable in polynomial\hyp time
if each class induces
a constant number~$c$ of connected components.
When the edge weights are polynomially bounded,
one can even obtain randomized fixed\hyp parameter
algorithms with respect to~$c$. %

\section{Preliminaries}
\noindent
By \(\N\),
we denote the set of natural numbers,
including zero.
For two multisets~$A$ and~$B$,
$A\uplus B$ is the multiset
obtained
by adding the multiplicities
of elements in~$A$ and~$B$.
By $A\setminus B$ we denote the multiset
obtained by subtracting
the multiplicities of elements in~$B$
from the multiplicities of elements in~$A$.
Finally,
given some weight function~\(\w\colon A\to\N\),
the \emph{weight} of a multiset~$A$
is $\w(A):=\sum_{e\in A}\mathbb \nu(e)\w(e)$,
where $\mathbb \nu(e)$~is
the multiplicity of~$e$ in~$A$.

We mostly consider simple
undirected graphs~$G=(V,E)$
with a set~$V(G):=V$ of \emph{vertices}
and
a set~$E(G):=E\subseteq\{\{u,v\}\mid u,v\in V, u\ne v\}$
of \emph{edges}.
Unless stated otherwise,
$n$~denotes the number
of vertices and $m$~denotes the number of edges.
Within proofs,
there may occur \emph{multigraphs},
where $E$~is a \emph{multiset},
and directed graphs~$G=(V,A)$
with a set of \emph{arcs}~$A\subseteq V^2$.
The \emph{degree} of a vertex
in an undirected multigraph
is its number of incident edges.
We call a vertex \emph{balanced} if it has even degree.
We call a graph \emph{balanced}
if all its vertices are balanced.
For a multiset~$R$ of edges,
we denote by~$V(R)$ the set of their incident vertices.
For a multiset~$R$ of edges of~$G$,
$\sg G R:=(V(R),R)$
is the (multi)graph
\emph{induced by the edges in~$R$}.

A \emph{walk from~$v_0$ to~$v_\ell$}
in a graph~$G$ is a
sequence~$w=(v_0,e_1,\allowbreak v_1,\allowbreak e_2,v_2,\dots,\allowbreak
e_\ell,v_\ell)$
such that $e_i=\{v_{i-1},v_{i}\}$
(if $G$~is undirected)
or $e_i=(v_{i-1},v_i)$
(if $G$~is directed)
for each~$i\in\{1,\dots,\ell\}$.
When there is no ambiguity
(like in simple graphs),
we will also specify walks simply as a list of vertices.
If $v_0=v_\ell$,
then we call~$w$ a \emph{closed walk}.
If all vertices on~$w$ are pairwise distinct,
then $w$~is a \emph{path}.
If only its first and last vertex coincide,
then $w$~is a \emph{cycle}.
A \emph{subwalk~$w'$} of~$w$ is
any subsequence~$w'$ of~$w$ that is itself a walk.
By $E(w)$,
we denote the multiset of edges on~$w$,
that is,
each edge appears on~$w$ and in~$E(w)$ equally often.
The \emph{weight} of walk~$w$
is $\w(w):=\sum_{i=1}^\ell\w(e_\ell)$.
For a walk~$w$,
we denote $\sg{G}{w}:=\sg{G}{E(w)}$.
Note that $\sg{G}{R}$ and~$\sg{G}{w}$
do not contain isolated vertices yet
might contain
edges with a higher multiplicity than~$G$ and,
therefore,
are not necessarily sub(multi)graphs of~$G$.
An \emph{Euler walk for~$G$}
is a walk that traverses
each edge or arc of~$G$
exactly as often as it is present in~$G$.
An \emph{Euler tour} is a closed Euler walk.
A graph is \emph{Eulerian}
if it allows for an Euler tour.
A connected undirected graph is Eulerian
if and only if all its
vertices are balanced.

For any $\alpha\geq 1$,
an \emph{$\alpha$\hyp approximate solution}
for a minimization problem
is a feasible solution whose weight
does not exceed the weight of an optimal solution
by more than a factor of~$\alpha$,
called the \emph{approximation factor} \citep{GJ79}.
The Exponential Time Hypothesis (ETH)
is that 3-SAT (\cref{prob:3sat})
with $n$~variables
is not solvable in $2^{o(n)}$~time
\citep{IPZ01}.

\section{NP-hardness for \hcpc{} with one incomparable class}\label{sec:NPhard}

\noindent
\hcpcl{} is polynomial\hyp time solvable
\citep{DST87,GI00,KV06}.
We prove that adding a single incomparable class
makes the problem NP\hyp hard.

\begin{theorem}
  \label{thm:hard}
  Even on edges with weight one
  and orders that are
  decomposable into a linear order and
  a class that is incomparable to all others classes,
  \begin{compactenum}[(i)]
  \item \hcpc{} is NP-hard and
  \item not solvable in $2^{o(n+m+k)}$~time unless ETH fails.
  \end{compactenum}
\end{theorem}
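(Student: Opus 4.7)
The plan is to reduce from $3$-SAT. Under ETH, $3$-SAT with $n$ variables and $m$ clauses (assumed sparse, with $m\in O(n)$) admits no $2^{o(n+m)}$-time algorithm. I would construct in linear time an \hcpc{} instance with unit weights, $k\in O(n)$ linearly ordered connected classes plus one incomparable connected class~$Q$, and with $|V|+|E|\in O(n+m)$, such that~$\phi$ is satisfiable if and only if the instance admits a closed walk of weight at most a specific budget~$W$.

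For each variable~$x_i$, the class~$E_i$ of the linear chain would be a small connected ``switch'' gadget whose mandatory traversal can be realized as an Eulerian-style walk ending at exactly one of two designated vertices $t_i^+$ or $t_i^-$, encoding the assignment of~$x_i$ to true or false, respectively. Between consecutive switches, and when closing the walk, the walker must detour through~$Q$ in order to cover all of its edges. To keep~$Q$ connected, $Q$~would be a sparse tree-backbone spanning the vertices~$t_i^{\pm}$, augmented, for every clause~$C_j$, by a small attached subgraph whose edges can be covered at zero extra cost precisely from the literal-vertices~$t_i^{\pm}$ realizing literals of~$C_j$; approaching these edges from any other vertex would require retraversing at least one backbone edge of~$Q$.

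I would pick~$W$ to equal the number of edges in the chain plus the number of edges of~$Q$ plus the minimum unavoidable inter-switch travel. A tour meeting this budget can afford no backbone retraversal, so each clause edge must be reached from a satisfying literal-vertex, thereby extracting a satisfying assignment. Conversely, any satisfying assignment defines a walk meeting~$W$ by routing each transition through the literal-vertex chosen according to the assignment, covering clause edges for free on the way.

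The main obstacle will be the joint design of the switches and the backbone: the switches must be compact yet really fix the exit vertex as $t_i^+$ or~$t_i^-$ (this will be argued via the parity of vertex degrees inside~$E_i$), while the backbone of~$Q$ must keep it connected yet offer no ``shortcut'' that covers clause edges independently of the chosen literals. Once the gadgets are in place, both claims (i) and (ii) follow: the reduction is polynomial-time, giving NP-hardness, and $n+m+k\in O(|\phi|)$ together with the linear-time construction shows that a $2^{o(n+m+k)}$-time algorithm for \hcpc{} would refute ETH.
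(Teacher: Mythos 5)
Your high-level plan matches the paper's: reduce from 3-SAT, put the variable gadgets into a linear chain of connected classes and the clause gadgets into a single connected class incomparable to everything else, and characterize satisfiability by a tight weight budget. But the proposal stops exactly where the proof begins: the gadgets are never constructed, and you yourself flag their ``joint design'' as the main obstacle. Two concrete problems in the sketch as written. First, you propose that the incomparable class~$Q$ be a tree backbone and then argue that ``a tour meeting this budget can afford no backbone retraversal.'' A closed walk covering all edges of a tree must traverse every tree edge an even number of times, hence at least twice (each tree edge is a bridge), so a tree backbone forces retraversal of all of~$Q$ and your budget accounting collapses; the paper instead makes $E^*$ a star of paths $(c_j^1,c^*,c_j^2)$ through a single hub~$c^*$, so that each such path is traversed exactly once and only its two endpoints~$c_j^1,c_j^2$ are imbalanced. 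Second, and more seriously, a variable occurs in several clauses, and nothing in your switch gadget (a single exit vertex $t_i^+$ or $t_i^-$) prevents the tour from servicing some clauses of~$x_i$ ``as if true'' and others ``as if false.'' Enforcing this consistency is the technical heart of the paper's proof: the variable cycle~$X_i$ has $6\mu_i$ positions, the excess edges of a tight tour are shown to form a perfect matching on the imbalanced vertices (Lemma~3.6), and a parity-propagation argument (Lemma~3.7) shows the matching must cover either all $t$-side slots or all $f$-side slots of~$X_i$. Your proposal contains no mechanism playing this role.

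A further gap is the lower-bound side of the budget. The paper does not reason about ``unavoidable inter-switch travel'' directly; it defines a tight tour as one of weight at most $|E|+b/2$ with $b$ the number of imbalanced vertices, and derives the entire structure of a minimum tour (the excess is a perfect matching contained in~$E_0$, and $\sg{G}{(E\setminus E_0)\cup M}$ is connected) from this single parity bound together with the precedence constraints. Since $Q$ is incomparable, its edges may be traversed at any time, so an ad hoc accounting of where the tour ``must'' go is fragile; you would need an analogue of the matching characterization to control what a budget-meeting tour can look like before you can extract an assignment from it.
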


\noindent
To prove \cref{thm:hard},
we use a polynomial\hyp time
many\hyp one reduction from 3-SAT,
which is
NP-hard \citep{Kar72}
and, unless the ETH fails,
is not solvable in $2^{o(n+m)}$~time \citep{IPZ01}.
\begin{problem}[3-SAT]\label[problem]{prob:3sat}
  \begin{probinstance}
    A formula~$\varphi$ in conjunctive normal form
    with $n$~variables and $m$~clauses,
    each containing at most three literals.
  \end{probinstance}
  \begin{probquest}
    Is there a assignment to the variables
    satisfying~$\varphi$?
  \end{probquest}
\end{problem}
\noindent
The reduction is carried out by the following construction,
which is illustrated in \cref{fig:constr}.

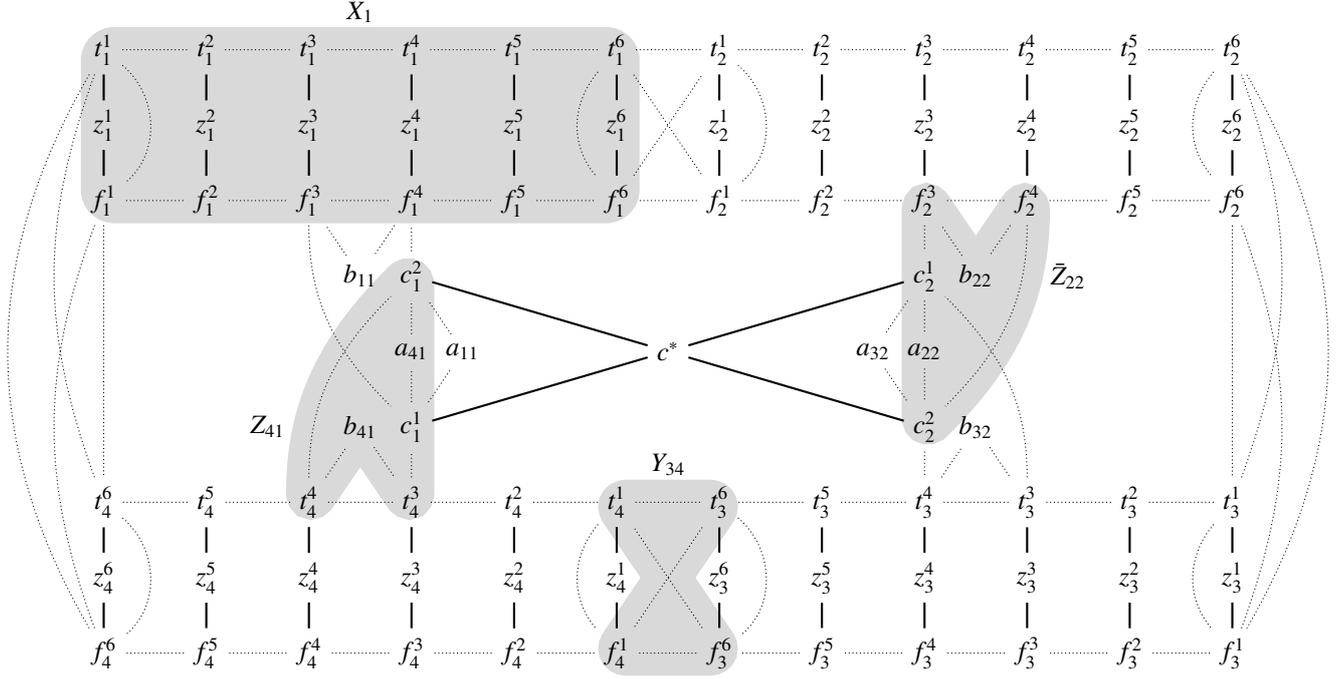
\begin{figure*}
  \centering
  \begin{tikzpicture}[x=1.35cm,y=1cm]
    \foreach \i in {1,...,6}
    {
      \node (t1\i) at (\i,0) {$t_1^{\i}$};
      \node (z1\i) at (\i,-1) {$z_1^{\i}$};
      \node (f1\i) at (\i,-2) {$f_1^{\i}$};
    }
    \foreach \i in {1,...,6}
    {
      \node (t2\i) at (6+\i,0) {$t_2^{\i}$};
      \node (z2\i) at (6+\i,-1) {$z_2^{\i}$};
      \node (f2\i) at (6+\i,-2) {$f_2^{\i}$};
    }
    \foreach \i in {1,...,6}
    {
      \node (t3\i) at (13-\i,-6) {$t_3^{\i}$};
      \node (z3\i) at (13-\i,-7) {$z_3^{\i}$};
      \node (f3\i) at (13-\i,-8) {$f_3^{\i}$};
    }
    \foreach \i in {1,...,6}
    {
      \node (t4\i) at (7-\i,-6) {$t_4^{\i}$};
      \node (z4\i) at (7-\i,-7) {$z_4^{\i}$};
      \node (f4\i) at (7-\i,-8) {$f_4^{\i}$};
    }
    \foreach \j in {1,...,4} {
      \foreach \i in {1,...,6} {
        \draw[edge] (t\j\i)--(z\j\i)--(f\j\i);   
      }
    }
    
    \node (a11) at (4.5,-4) {$a_{11}$};
    \node (b11) at (3.5,-3) {$b_{11}$};
    \node (a41) at (4,-4) {$a_{41}$};
    \node (b41) at (3.5,-5) {$b_{41}$};
    \node (a22) at (9,-4) {$a_{22}$};
    \node (b22) at (9.5,-3) {$b_{22}$};
    \node (a32) at (8.5,-4) {$a_{32}$};
    \node (b32) at (9.5,-5) {$b_{32}$};

    \node (c*) at (6.5,-4) {$c^*$};
    \node (c11) at (4,-5) {$c_1^1$};
    \node (c12) at (4,-3) {$c_1^2$};
    \node (c21) at (9,-3) {$c_2^1$};
    \node (c22) at (9,-5) {$c_2^2$};
    
    \draw[enull] (f13) to[out=-90] (c11);
    \draw[enull] (c11)--(a11)--(c12)--(f14)--(b11)--(f13);
    \draw[enull] (f23)--(c21)--(a22)--(c22);
    \draw[enull] (f24) to[out=-90, in=45](c22);
    \draw[enull] (f24) -- (b22)--(f23);
    \draw[enull] (t43)--(c11)--(a41)--(c12);
    \draw[enull] (t44) to[out=90,in=180+45] (c12);
    \draw[enull] (t44) -- (b41)--(t43);
    \draw[enull] (t33) to[out=90,in=-45] (c21);
    \draw[enull] (c21)--(a32)--(c22)--(t34)--(b32)--(t33);

    \foreach \j in {1,...,2} {
      \foreach \i in {1,...,2} {
        \draw[edge] (c*)--(c\i\j);
      }
    }
    \foreach \i in {1,...,4} {
      \draw[enull] (t\i1)--(t\i2)--(t\i3)--(t\i4)--(t\i5)--(t\i6);
      \draw[enull] (f\i1)--(f\i2)--(f\i3)--(f\i4)--(f\i5)--(f\i6);
    }
    \draw[enull] (t36) to[out=-40,in=40] (f36);
    \draw[enull] (t46) to[out=-40,in=40] (f46);
    \draw[enull] (t31) to[out=-130,in=130] (f31);
    \draw[enull] (t41) to[out=-130,in=130] (f41);
    \draw[enull] (t11) to[out=-40,in=40] (f11);
    \draw[enull] (t21) to[out=-40,in=40] (f21);
    \draw[enull] (t16) to[out=-130,in=130] (f16);
    \draw[enull] (t26) to[out=-130,in=130] (f26);
    \draw[enull] (t16)--(f21)--(f16)--(t21)--(t16);
    \draw[enull] (t36)--(f41)--(f36)--(t41)--(t36);
    
    \draw[enull] (t26) to[out=-60,in=60] (f31);
    \draw[enull] (f31) to[out=70,in=-70] (f26);
    \draw[enull] (f26)--(t31);
    \draw[enull] (t31) to[out=70,in=-70] (t26);
    
    \draw[enull] (t11) to[out=180+60,in=180-60] (f46);
    \draw[enull] (f46) to[out=180-70,in=180+70] (f11);
    \draw[enull] (f11)--(t46);
    \draw[enull] (t46) to[out=180-70,in=180+70] (t11);

    \begin{pgfonlayer}{bg}
      \path[subset] (t11.center) -- (f11.center) -- (f16.center) -- (6,0) -- cycle;
      \path[subset] (f41.center) -- (f36.center) -- (t41.center) -- (t36.center) -- cycle;
      \path[subset] (a22.center) -- (c21.center) -- (f23.center) -- (b22.center) -- (f24.center) to[out=-90,in=45] (c22.center) -- cycle;
      \path[subset] (a41.center) -- (c11.center) -- (t43.center) -- (b41.center) -- (t44.center) to[out=90,in=180+45] (c12.center) -- cycle;
      \node (X11) at (3.5,0.5) {$X_{1}$};
      \node (Y34) at (6.5,-5.5) {$Y_{34}$};
      \node (Z22) at (10.4,-3) {$\bar Z_{22}$};
      \node (Z41) at (2.6,-5) {$Z_{41}$};
    \end{pgfonlayer}
    \end{tikzpicture}
    \caption{Illustration of \cref{constr:3satred}.
      The graph is generated from the formula~$\varphi=(\bar x_1\vee x_4)\wedge (\bar x_2\vee x_3)$, that is, $C_1=(\bar x_1\vee x_4)$ and~$C_2=(\bar x_2\vee x_3)$.
    The dotted edges form the edge set~$E_0$.
    The solid edges form the paths~$P_i^\ell$ and $(c_j^1,c^*,c_j^2)$.
    The gray areas illustrate the types of cycles introduced in the construction:
    they consist of the dashed edges enclosed in the gray areas.
    Note that~$E_0$ (the dotted edges)
    forms an Eulerian subgraph: it is the union of cycles
    and thus each vertex has an even number of incident edges in~$E_0$.
    Thus, each vertex of the form~$t_i^\ell$, $f_i^\ell$, $c_j^1$ and~$c_j^2$
  is imbalanced and they are the only imbalanced vertices.}
\label{fig:constr}
\end{figure*}
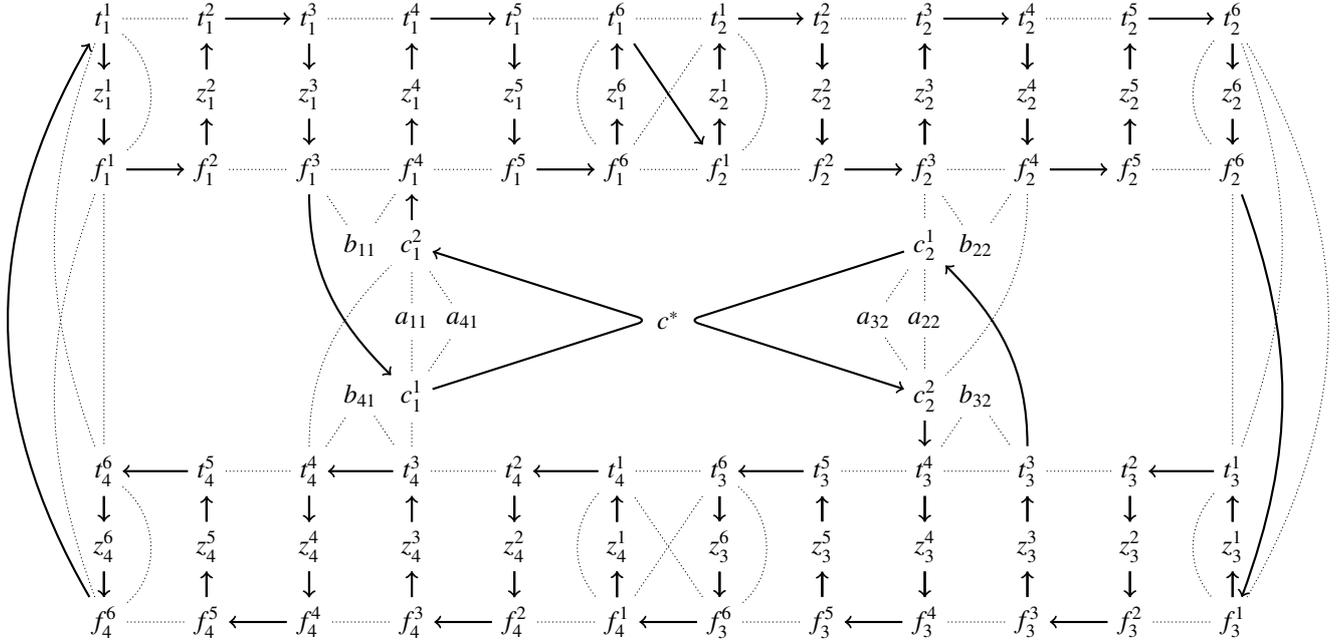
\begin{figure*}
  \centering
  \begin{tikzpicture}[x=1.35cm,y=1cm]
    \foreach \i in {1,...,6}
    {
      \node (t1\i) at (\i,0) {$t_1^{\i}$};
      \node (z1\i) at (\i,-1) {$z_1^{\i}$};
      \node (f1\i) at (\i,-2) {$f_1^{\i}$};
    }
    \foreach \i in {1,...,6}
    {
      \node (t2\i) at (6+\i,0) {$t_2^{\i}$};
      \node (z2\i) at (6+\i,-1) {$z_2^{\i}$};
      \node (f2\i) at (6+\i,-2) {$f_2^{\i}$};
    }
    \foreach \i in {1,...,6}
    {
      \node (t3\i) at (13-\i,-6) {$t_3^{\i}$};
      \node (z3\i) at (13-\i,-7) {$z_3^{\i}$};
      \node (f3\i) at (13-\i,-8) {$f_3^{\i}$};
    }
    \foreach \i in {1,...,6}
    {
      \node (t4\i) at (7-\i,-6) {$t_4^{\i}$};
      \node (z4\i) at (7-\i,-7) {$z_4^{\i}$};
      \node (f4\i) at (7-\i,-8) {$f_4^{\i}$};
    }
    \foreach \j in {1,...,4} {
      \foreach \i in {1,...,6} {
        \draw[edge] (t\j\i)--(z\j\i)--(f\j\i);   
      }
    }
    
    \node (a11) at (4,-4) {$a_{11}$};
    \node (b11) at (3.5,-3) {$b_{11}$};
    \node (a41) at (4.5,-4) {$a_{41}$};
    \node (b41) at (3.5,-5) {$b_{41}$};
    \node (a22) at (9,-4) {$a_{22}$};
    \node (b22) at (9.5,-3) {$b_{22}$};
    \node (a32) at (8.5,-4) {$a_{32}$};
    \node (b32) at (9.5,-5) {$b_{32}$};

    \node (c*) at (6.5,-4) {$c^*$};
    \node (c11) at (4,-5) {$c_1^1$};
    \node (c12) at (4,-3) {$c_1^2$};
    \node (c21) at (9,-3) {$c_2^1$};
    \node (c22) at (9,-5) {$c_2^2$};
    
    \draw[tour ] (f13) to[out=-90] (c11);
    \draw[tour, rounded corners] (c11) --(c*.west) -- (c12);
    \draw[tour, rounded corners] (c21) --(c*.east) -- (c22);
    \draw[tour] (c12) -- (f14);
    \draw[enull] (c11)--(a11)--(c12)--(f14)--(b11)--(f13);
    \draw[enull] (f23)--(c21)--(a22)--(c22);
    \draw[enull] (f24) to[out=-90, in=45](c22);
    \draw[enull] (f24) -- (b22)--(f23);
    \draw[enull] (t43)--(c11)--(a41)--(c12);
    \draw[enull] (t44) to[out=90,in=180+45] (c12);
    \draw[enull] (t44) -- (b41)--(t43);
    \draw[tour] (t33) to[out=90,in=-45] (c21);
    \draw[enull] (c21)--(a32)--(c22);
    \draw[tour] (c22)--(t34);
    \draw[enull] (t34)--(b32)--(t33);
    
    \foreach \i in {1,...,4} {
      \draw[enull] (t\i1)--(t\i2)--(t\i3)--(t\i4)--(t\i5)--(t\i6);
      \draw[enull] (f\i1)--(f\i2)--(f\i3)--(f\i4)--(f\i5)--(f\i6);
    }
    \draw[enull] (t36) to[out=-40,in=40] (f36);
    \draw[enull] (t46) to[out=-40,in=40] (f46);
    \draw[enull] (t31) to[out=-130,in=130] (f31);
    \draw[enull] (t41) to[out=-130,in=130] (f41);
    \draw[enull] (t11) to[out=-40,in=40] (f11);
    \draw[enull] (t21) to[out=-40,in=40] (f21);
    \draw[enull] (t16) to[out=-130,in=130] (f16);
    \draw[enull] (t26) to[out=-130,in=130] (f26);
    \draw[enull] (t16)--(f21)--(f16)--(t21)--(t16);
    \draw[enull] (t36)--(f41)--(f36)--(t41)--(t36);
    
    \draw[enull] (t26) to[out=-60,in=60] (f31);
    \draw[tour] (f26) to[out=-70,in=70] (f31);
    \draw[enull] (f26)--(t31);
    \draw[enull] (t31) to[out=70,in=-70] (t26);
    
    \draw[tour] (f46) to[out=180-60,in=180+60] (t11);
    \draw[enull] (f46) to[out=180-70,in=180+70] (f11);
    \draw[enull] (f11)--(t46);
    \draw[enull] (t46) to[out=180-70,in=180+70] (t11);

    \foreach \j in {1} {
      \foreach \i in {1,3,5} {
        \draw[tour] (t\j\i) -- (z\j\i);
        \draw[tour] (z\j\i) -- (f\j\i);
      }
      \foreach \i in {2,4,6} {
        \draw[tour] (f\j\i) -- (z\j\i);
        \draw[tour] (z\j\i) -- (t\j\i);
      }
    }
    \foreach \j in {2,3,4} {
      \foreach \i in {2,4,6} {
        \draw[tour] (t\j\i) -- (z\j\i);
        \draw[tour] (z\j\i) -- (f\j\i);
      }
      \foreach \i in {1,3,5} {
        \draw[tour] (f\j\i) -- (z\j\i);
        \draw[tour] (z\j\i) -- (t\j\i);
      }
    }

    \draw[tour] (f11)--(f12);
    \draw[tour] (f15)--(f16);
    \draw[tour] (t12)--(t13);
    \draw[tour] (t14)--(t15);

    \draw[tour] (t21)--(t22);
    \draw[tour] (t23)--(t24);
    \draw[tour] (t25)--(t26);
    \draw[tour] (f22)--(f23);
    \draw[tour] (f24)--(f25);

    \draw[tour] (t31)--(t32);
    \draw[tour] (t35)--(t36);
    \draw[tour] (f32)--(f33);
    \draw[tour] (f34)--(f35);

    \draw[tour] (t41)--(t42);
    \draw[tour] (t43)--(t44);
    \draw[tour] (t45)--(t46);
    \draw[tour] (f42)--(f43);
    \draw[tour] (f44)--(f45);

    \draw[tour] (t16) -- (f21);
    \draw[tour] (f36)-- (f41);
    \end{tikzpicture}
    \caption{A minimum\hyp weight
      closed walk for the graph in \cref{fig:constr}
      first visits each edge in~$E_0$ exactly once
      (the dotted edges in \cref{fig:constr}),
      and then follows the arrows as shown in this figure.
      The walk corresponds to~$x_1=0$ and~$x_2=x_3=x_4=1$.}
\label{fig:tour}
\end{figure*}

\begin{construction}\label[construction]{constr:3satred}
  Let $\varphi$~be an instance of 3-SAT.
  First,
  delete each clause containing both~$x$ and~$\bar x$:
  they are always satisfied.
  Consider now the variables~$x_1,\dots,x_n$
  and clauses~$C_1,\dots,C_m$.
  For each~$i\in\{1,\dots,n\}$,
  let $\mul i$ denote the number of
  clauses containing either~$x_i$ or~$\bar x_i$.
  We describe an instance~$I_{\varphi}=(G,\w,\prt,\ord)$ of \hcpc{}.
  Each edge will have weight one.

  Graph~$G$ contains
  a path~$(c_j^1,c^*,c_j^2)$
  for each~$j\in\{1,\dots,m\}$
  and,
  for each $i\in\{1,\dots,n\}$ and $\ell\in\{1,\dots,6\mul i\}$,
  a path \(P_{i}^\ell:=(t_i^\ell,z_i^\ell,f_i^\ell).\)
  For each~$i\in\{1,\dots,n\}$,
  it contains a cycle
  \[X_i:=(t_i^1,t_i^2,\dots,t_i^{6\mul i-1},t_i^{6\mul i},f_i^{6\mul i},f_i^{6\mul i-1},\dots,f_i^2,f_i^1,t_i^1).\]
  For each $i\in\{1,\dots,n\}$ and $i'=i\bmod n + 1$,
  cycles~$X_i$ and $X_{i'}$
  are connected to each other via a cycle
  \[
    Y_{ii'}=(t_i^{6\mul i},f_{i'}^{1},f_{i}^{6\mul i},t_{i'}^1,t_i^{6\mul i}).
  \]
  For each literal~$x_i$ in a clause~$C_j$,
  graph~$G$ contains a cycle
  \[
    Z_{ij}:=(t_i^{6\ell-3},c_j^1,a_{ij},c_j^2,t_i^{6\ell-2},b_{ij},t_i^{6\ell-3}),
  \]
  where~$\ell\leq\mul i$ is such that~$C_j$
  is the $\ell$-th clause containing~$x_i$ or~$\bar x_i$.
  For each literal~$\bar x_i$
  in a clause~$C_j$,
  graph~$G$ contains a cycle
  \[
    \bar Z_{ij}:=(f_i^{6\ell-3},c_j^1,a_{ij},c_j^2,f_i^{6\ell-2},b_{ij},f_i^{6\ell-3}),
  \]
  where~$\ell\leq\mul i$ is such that~$C_j$
  is the $\ell$-th clause containing~$x_i$ or~$\bar x_i$.
  
  The edges are ordered as follows.
  For each $i\in\{1,\dots,n\}$ and ${\ell}\in\{1,\dots,6\mul i\}$,
  $E_i^{\ell}=E(P_i^{\ell})$ is a connected class.
  They are lexicographically ordered, that is,
  \[
  E_i^{\ell}\ord E_{i'}^{{\ell}'}\iff (i< i')\vee (i=i'\wedge {\ell}\leq {\ell}').
  \]
  They are preceded by the connected class
  \[
    E_0=\bigcup_{i=1}^n E(X_i)\cup \bigcup_{i=1}^nE(Y_{i,i\bmod n + 1})\cup
    \bigcup_{x_i\in C_j}E(Z_{ij})\cup\bigcup_{\bar x_i\in C_j}E(\bar Z_{ij}),
  \]
  Finally,
  the edge set~$E^*$
  consisting of all edges incident to~$c^*$
  forms a connected
  class incomparable to all other classes.\qed
\end{construction}

\noindent
For convenience,
we collect the vertices of~$I_\varphi$
of the form~$t_i^{\ell}$ and~$f_i^{\ell}$
and of the form~$c_j^1$ and~$c_j^2$ into sets
\begin{align*}
  \vft&:=\bigcup_{i=1}^n\bigcup_{{\ell}=1}^{6\mul i}\,\{t_i^{\ell},f_i^{\ell}\}\quad\text{and}
  &
     \vc&:=\bigcup_{j=1}^m\,\{c_j^1,c_j^2\}.
\end{align*}

\begin{observation}
  \label[observation]{obs:struct}
  Let $I_\varphi=(G,\w,\prt,\ord)$~be
  the \hcpc{} instance
  constructed by \cref{constr:3satred}
  from a 3-SAT instance~$\varphi$ with $n$~variables and $m$~clauses.
  Then,
  \begin{compactenum}[(i)]
  \item\label{E0euler} the subgraph~$\sg{G}{E_0}$ is Eulerian:
    it is connected and the union of pairwise
    edge-disjoint cycles,
    that is, balanced,
    
  \item\label{imbavfc} the imbalanced vertices of~$G$
    are therefore $\vft\cup\vc$, and
  \item\label{numverts} the number of vertices, edges, and classes is $O(n+m)$,
    since $\sum_{i=1}^n\mul i\leq 3m$ in any 3-SAT formula.
  \end{compactenum}
\end{observation}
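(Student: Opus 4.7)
My plan is to address the three claims in order, since (ii) depends crucially on (i) and (iii) is essentially bookkeeping.

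For part~(i), the plan is to exhibit $\sg{G}{E_0}$ as an edge\hyp disjoint union of the cycles $X_i$, $Y_{i,i'}$, $Z_{ij}$, $\bar Z_{ij}$ defined in \cref{constr:3satred}; once this is verified, the induced subgraph is balanced because every vertex lies on a cycle the same (even) number of times. The bulk of the work is verifying edge\hyp disjointness, which I will do by case analysis on which ``type'' of edge each cycle uses: the $X_i$'s use only edges between vertices of the form $t_i^\ell, f_i^\ell$ with the same index $i$; the $Y_{i,i'}$'s use only edges between the last/first rungs of the $i$\hyp and $i'$\hyp gadgets; and the $Z_{ij}$'s and $\bar Z_{ij}$'s use edges touching the ``fresh'' vertices $a_{ij}, b_{ij}, c_j^1, c_j^2$. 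Within a single $X_i$ or within the $Z_{ij}$ cycles, a close look at the indices $6\ell-3,6\ell-2$ shows no edge is reused. Connectedness then follows from the ``ring'' of cycles $X_1,Y_{12},X_2,Y_{23},\dots,X_n,Y_{n1}$, to which every $Z_{ij}$ and $\bar Z_{ij}$ attaches through some $t_i^\ell$ or $f_i^\ell$ vertex (and every clause has at least one literal, since clauses containing both $x$ and $\bar x$ were deleted at the start of the construction).

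For part~(ii), the plan is to start from (i) and track how edges outside $E_0$ change vertex parities. The only edges of $G$ not in $E_0$ are the two edges of each three\hyp vertex path $P_i^\ell=(t_i^\ell,z_i^\ell,f_i^\ell)$ and the two edges of each path $(c_j^1,c^*,c_j^2)$ in $E^*$. Each $z_i^\ell$ receives exactly two such incident edges, so is balanced; the vertex $c^*$ receives $2m$ such incident edges, so is balanced; and the vertices $a_{ij},b_{ij}$ receive none and already had even degree in $\sg{G}{E_0}$. Conversely, each $t_i^\ell$ and $f_i^\ell$ picks up exactly one new incident edge (from $P_i^\ell$), and each $c_j^1,c_j^2$ picks up exactly one new incident edge (from $E^*$), flipping the parity from even to odd. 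Hence the set of imbalanced vertices is precisely $\vft\cup\vc$.

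Part~(iii) is a direct count: the vertex set consists of $3\sum_{i=1}^n 6\mul i$ vertices of type $t,z,f$, plus $2m+1$ vertices of type $c_j^1,c_j^2,c^*$, plus at most $2$ vertices $a_{ij},b_{ij}$ per literal occurrence; since $\sum_{i=1}^n\mul i$ equals the total number of literal occurrences and each of the $m$ clauses has at most three literals, this sum is at most $3m$, giving $O(n+m)$ vertices. An analogous count bounds the number of edges and of classes (one class $E_i^\ell$ per path $P_i^\ell$, plus $E_0$ and $E^*$) by $O(n+m)$.

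The only non\hyp routine step is the edge\hyp disjointness check in~(i); I expect the indexing convention that each clause $C_j$ with $x_i\in C_j$ or $\bar x_i\in C_j$ uses the dedicated ``block'' of indices $\{6\ell-5,\dots,6\ell\}$ (the $\ell$\hyp th block for variable~$i$) to do all the heavy lifting, separating the attachment points of distinct $Z_{ij}$ and $\bar Z_{ij}$ cycles along the same variable gadget.
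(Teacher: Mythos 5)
Your proposal is correct and follows essentially the same route as the paper, which treats this as a self-justifying observation: $\sg{G}{E_0}$ is a connected union of pairwise edge-disjoint cycles $X_i$, $Y_{ii'}$, $Z_{ij}$, $\bar Z_{ij}$ (hence balanced and Eulerian), the vertices of $\vft\cup\vc$ are exactly those gaining odd degree from the paths $P_i^\ell$ and $(c_j^1,c^*,c_j^2)$, and the size bound follows from $\sum_i\mul i\le 3m$. Your additional care with edge-disjointness and the ring structure of the $X_i$ and $Y_{i,i\bmod n+1}$ cycles only makes explicit what the paper leaves to the reader and to the caption of its figure.
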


\noindent
In the following,
we will show that the \hcpc{} instance~$I_\varphi$
allows for a \emph{\chp{}}
tour if and only if $\varphi$ is satisfiable:

\begin{definition}
  A feasible solution
  for an \hcpc{} instance~$(G,\w,\allowbreak \prt,{\ord})$
  with $G=(V,E)$
  is a \emph{\chp{} tour}
  if it has weight at most $|E|+b/2$,
  where $b$~is the number of imbalanced vertices in~$G$.
\end{definition}

\begin{proposition}
  \label[proposition]{prop:equivalence}
  The \hcpc{} instance~$I_\varphi=(G,\w,\prt,\ord)$
  created from a 3-SAT instance~$\varphi$
  by \cref{constr:3satred}
  allows for a \chp{} tour if and only if~$\varphi$ is
  satisfiable.
\end{proposition}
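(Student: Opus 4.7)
The plan is to reduce the question to the combinatorial structure of the re-traversed edges in any candidate tour. A standard parity argument shows that any closed walk traversing every edge of~$G$ has weight at least $|E|+b/2$, with equality exactly when the multiset of re-traversed edges forms a perfect matching~$M$ on the imbalanced vertex set~$\vft\cup\vc$. Any re-traversal of a $P_i^\ell$-edge or an $E^*$-edge would place the balanced vertex $z_i^\ell$ or~$c^*$ into~$M$, so in any \chp{} tour every $P_i^\ell$- and every $E^*$-edge is traversed exactly once and~$M$ consists solely of $E_0$-edges. Consequently, each~$c_j^\alpha$ is $M$-matched to one of its only imbalanced neighbors, namely a $t$- or $f$-vertex associated with a specific literal occurrence in~$C_j$.

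For sufficiency, I fix for each clause~$C_j$ one true literal~$L_j$ under the given satisfying assignment and construct the tour in stages. First, traverse $\sg{G}{E_0}$ via an Euler tour (which exists by \cref{obs:struct}) starting at~$t_1^1$ if $x_1=0$ and at~$f_1^1$ otherwise. Second, process the paths $P_1^1,P_1^2,\dots,P_n^{6\mul n}$ in the required order: within each variable~$i$, alternate the entry endpoint of~$P_i^\ell$ so that consecutive paths are linked by a single $X_i$-edge, and at each position~$\ell$ where~$L_j$ is the $\ell$-th literal of~$x_i$ or~$\bar x_i$ in some~$C_j$, replace the direct transition between~$P_i^{6\ell-3}$ and~$P_i^{6\ell-2}$ by a 4-edge detour $v\,{-}\,c_j^1\,{-}\,c^*\,{-}\,c_j^2\,{-}\,v'$ along the appropriate $Z_{ij}$ or $\bar Z_{ij}$ edges. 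Between variables, use a single $Y_{i,i+1}$-edge. Counting gives $|E_0|$ first uses in~$E_0$, $12\sum_{i=1}^n\mul i$ in the $P_i^\ell$'s, $2m$ in~$E^*$, and $6\sum_{i=1}^n\mul i+m$ re-traversals, for a total of $|E|+b/2$.

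For necessity, assume a \chp{} tour exists and let~$M$ be its matching as above. Each $c_j^\alpha$ is $M$-matched to a literal vertex of~$C_j$, and the task is to check that these choices are sign-consistent per variable. I would track the orientations $o_\ell(i)\in\{T,F\}$ at which the tour enters each~$P_i^\ell$, noting that the walk between~$P_i^\ell$ and~$P_i^{\ell+1}$ can only use $M$-edges and $E^*$-first-uses, since the edges of $E_i^{\ell+1}$ and later classes are still untouched. Every intermediate vertex of this walk has local degree two in it, and since $M$-edges cannot meet balanced vertices and $E^*$-edges are only incident to~$c^*$ and to the $c_j^\alpha$'s, a short case analysis leaves only two possibilities: a direct 1-edge $X_i$-step, or a 4-edge $c^*$-detour $v\,{-}\,c_j^1\,{-}\,c^*\,{-}\,c_j^2\,{-}\,v'$; both force $o_\ell(i)\neq o_{\ell+1}(i)$. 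Hence orientations alternate within each variable, the sign of the literal matched at any detour position~$6\ell-3$ is determined by $o_1(i)$, and setting $x_i=1$ iff $o_1(i)=F$ produces a satisfying assignment, since every clause accounts for exactly one $c^*$-visit and thus exactly one detour satisfying it.

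The main obstacle is this final case analysis: ruling out longer transitions---for example, 8-edge walks that would involve two $c^*$-visits in a single transition---that could otherwise flip orientations mid-variable without obviously overshooting the tightness budget. The key global fact is that the $2m$ available $E^*$-edges allow only~$m$ total $c^*$-visits, each pinned to a single clause's $c_j^1,c_j^2$ pair, so no spare $c^*$-traversals can be inserted to decouple the orientation pattern from the chosen literals.
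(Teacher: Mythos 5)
Your proposal is correct, and while the sufficiency direction coincides with the paper's construction (Euler tour of~$E_0$ first, then the paths~$P_i^\ell$ in order with alternating endpoints and one $c_j^1$--$c^*$--$c_j^2$ detour per clause), your necessity direction takes a genuinely different route. The paper works \emph{statically} with the doubled-edge set~$M$: it first proves that some tight tour can be rearranged so that $\sg{G}{(E\setminus E_0)\cup M}$ is connected (\cref{lem:decomp}, via a minimal-prefix exchange argument), and then derives the per-variable alternation of covered edges purely from the matching structure and this connectivity (\cref{lem:covering}, Figure~3). You instead argue \emph{dynamically}, tracking the tour itself: the segment between the exit of~$P_i^\ell$ and the entry of~$P_i^{\ell+1}$ can only consist of second traversals of $M$-edges and first traversals of $E^*$-edges, and a local analysis pins it down to a one-edge $X_i$-step or a four-edge $c^*$-detour, forcing the orientation to alternate. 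This buys you a more direct argument that avoids the exchange lemma entirely; what it costs is exactly the case analysis you flag as the ``main obstacle.'' That obstacle closes more cleanly than your global $E^*$-counting suggests: at any vertex $v\in\vft$, every incident $E_0$-edge is first-traversed before any path edge, so the unique traversal of $v$'s $P$-edge must be paired (in the tour's visit to~$v$) with the \emph{second} traversal of $v$'s unique $M$-edge; hence every transition begins and ends with an $M$-edge, and from a vertex $c_j^\alpha$ reached via its $M$-edge the only continuation is its single $E^*$-edge to~$c^*$, after which the next $M$-edge lands on a $\vft$-vertex where the walk must immediately enter~$P_i^{\ell+1}$ or get stuck. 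This local pairing rules out longer transitions outright, and the construction's adjacencies ($c_j^1$ only neighbours vertices of index $6\ell-3$, $c_j^2$ only those of index $6\ell-2$, with matching sign) force the detour to stay within one clause gadget, which is what makes your sign-consistency conclusion go through.
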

\noindent
In the rest of this section,
it remains to prove \cref{prop:equivalence},
which, together with \cref{obs:struct}\eqref{numverts},
yields \cref{thm:hard}.

\paragraph{Satisfiability of~$\varphi$ implies a tight tour in~$I_\varphi$}
Assume that~$\varphi$ is satisfiable.
We show a \chp{} tour~$T$ for~$I_\varphi$.
Without loss of generality,
assume that~$x_1$ is ``true'':
otherwise, we can replace~$x_1$ by~$\bar x_1$
throughout the formula~$\varphi$.

The \chp{} tour~$T$ for~$I_\varphi$ then looks as follows
(an example is shown in \cref{fig:tour}).
It starts in~$f_1^1$,
first visits each edge of~$E_0$ exactly once
and returns to~$f_1^1$.
This is possible by \cref{obs:struct}\eqref{E0euler}.
Then,
it remains to traverse the paths~$(t_i^\ell,z_i^\ell,f_i^\ell)$
for each~$i\in\{1,\dots,n\}$ and~$\ell\in\{1,\dots,6\mul i\}$
and the paths~$(c_j^1,c^*,c_j^2)$ for each~$j\in\{1,\dots,m\}$.
This is done as follows.
For~$i$ from~$1$ to~$n$,
if $x_i$~is ``true'',
$T$~visits the vertices \[f_i^1,z_i^1,t_i^1,~~t_i^2,z_i^2,f_i^2,~~f_i^3,z_i^3,t_i^3,~~t_i^4,z_i^4,f_i^4~~\dots~~,t_i^{6\mul i},z_i^{6\mul i},f_i^{6\mul i},\]
for some $\ell\in\{1,\dots,\mul i\}$ taking
a detour through the vertices $t_i^{6\ell-3},c_j^1,c^*,c_j^2,t_i^{t\ell-2}$
if clause~$C_j$ contains~$x_i$
and
$(c_j^1,c^*,c_j^2)$~has not been traversed before.
If $x_i$~is ``false'',
then $T$~visits
\[t_i^1,z_i^1,f_i^1,~~f_i^2,z_i^2,t_i^2,~~t_i^3,z_i^3,f_i^3,~~f_i^4,z_i^4,t_i^4,~~\dots~~,f_i^{6\mul i},z_i^{6\mul i},t_i^{6\mul i},\]
for some $\ell\in\{1,\dots,\mul i\}$ taking
a detour through the vertices $f_i^{6\ell-3},c_j^1,c^*,c_j^2,f_i^{6\ell-2}$
if clause~$C_j$ contains~$\bar x_i$
and
$(c_j^1,c^*,c_j^2)$~has not been traversed before.
Finally,
after~$f_n^{6\mul n}$ or~$t_n^{6\mul n}$,
the walk~$T$ returns to~$f_1^1$.
Note that this traversal
is possible
due to the cycle~$Y_{i,i\bmod n+1}$
for each~$i\in\{1,\dots,n\}$.%

Observe that the closed walk~$T$ contains
all edges and respects precedence constraints:
For the edges in~$E_0$ and
all paths~$(t_i^\ell,z_i^\ell,f_i^\ell)$
for $i\in\{1,\dots,n\}$ and~$\ell\in\{1,\dots,6\mul i\}$,
this is obvious.
To see that the path~$(c_j^1,c^*,c_j^2)$
has been traversed for each~$j\in\{1,\dots,m\}$,
observe that each clause~$C_j$ contains a true literal,
so that a detour via~$c_j^1,c^*,c_j^2$ is taken.

To see that $T$~is \chp{},
we check which edges
are traversed a second time.
When $x_i$ is ``true'',
the edges~$\{f_i^{2\ell},f_i^{2\ell+1}\}\in E_0$ for~$\ell\in\{1,\dots,\mul i-1\}$
are visited a second time,
whereas
each edge~$\{t_i^{2\ell-1},t_i^{2\ell}\}\in E_0$ for~$\ell\in\{1,\dots,\mul i\}$
is traversed a second time
or skipped by a detour that traverses the edges~$\{t_i^{2\ell-1},c_j^1\}$ and~$\{t_i^{2\ell},c_j^2\}$
a second time.
Analogously
when $x_i$ is ``false''.
Moreover,
for each $i\in\{1,\dots,n\}$,
one edge of the cycle~$Y_{i,i\bmod n+1}$ is visited a second time:
it joins the last vertex visited by~$T$ in~$X_i$
to the first vertex visited by~$T$ in~$X_{i\bmod n+1}$.
We thus see that
the edges visited a second time form a matching.
Their endpoints are the $b$~imbalanced vertices~$\vft\cup\vc$.
Thus,
$T$~traverses not more than $|E|+b/2$ edges.

\paragraph{\Chp{} tour for $I_\varphi$ implies satisfiability of~$\varphi$}
Assume that~$I_\varphi$
allows for a \chp{} tour~$T$.
We show that
$\varphi$ is satisfiable.

\begin{lemma}
  \label[lemma]{lem:decomp}
  Let $M=E(T)\setminus E$,
  that is,
  $M$~is the multiset of
  the edges that the tight tour $T$~traverses additionally to~$E$
  (taking into account the multiplicity of additional visits).
  Then,
  \begin{compactenum}[(i)]
    
  \item\label{matching} $M\subseteq E_0$ is a perfect matching
    on the vertices~$\vft\cup \vc{}$,
    in particular,
    $M$~contains each edge at most once,
    
  \item\label{matchin2} each edge in~$M$
    has an endpoint in~$\vft{}$,
  \item\label{t2match} $\sg{G}{(E\setminus E_0)\cup M}$ is connected.
  \end{compactenum}
\end{lemma}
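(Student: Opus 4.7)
I would handle (i) and (ii) with a $T$-join argument applied to~$M$, and (iii) by rotating the tour~$T$.

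For (i): Since $T$ is a closed walk, $\sg{G}{E\cup M}$ is Eulerian, hence $\deg_M(v)\equiv \deg_E(v)\pmod 2$ at every vertex~$v$. In particular $\deg_M(v)\geq 1$ on $\vft\cup\vc$ and $\deg_M(v)\geq 0$ elsewhere, giving $|M|\geq b/2$. Tightness of $T$ together with unit edge weights yields $|M|\leq b/2$, so every inequality is tight: $\deg_M(v)=1$ on imbalanced vertices, $\deg_M(v)=0$ on balanced ones, and $M$ contains no edge of multiplicity $\geq 2$. Thus $M$ is a perfect matching on $\vft\cup\vc$. Because $z_i^\ell$ and $c^*$ are balanced in~$G$, $M$ avoids all $E_i^\ell$- and $E^*$-edges, so $M\subseteq E_0$. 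For (ii): an inspection of \cref{constr:3satred} shows that every edge of $G$ incident to $\vc$ has its other endpoint in $\{a_{ij},b_{ij},c^*\}\cup\vft$; since $M\subseteq E$, each $M$-edge has at least one endpoint in~$\vft$.

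For (iii): I would rotate $T$ to begin at $z_1^1$, which $T$ visits exactly once because $\deg_{E\cup M}(z_1^1)=2$. Writing the rotated tour as
\[
T'=(z_1^1,\,e_{\mathrm{out}},\,x,\,W_1',\,s,\,W_2',\,y,\,e_{\mathrm{in}},\,z_1^1),
\]
where $s$ is the original start of~$T$ and $\{x,y\}=\{t_1^1,f_1^1\}$, the precedence constraint forces each edge of $E_0\setminus M$ to be traversed only once in $T$, inside the original prefix---that is, in the segment~$W_2'$ of~$T'$. Hence $W_1'$ is a walk whose edges all lie in $(E\setminus E_0)\cup M$, so $\sg{G}{W_1'}$ is a connected subgraph of $H:=\sg{G}{(E\setminus E_0)\cup M}$. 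Because $W_1'$ traverses each $P_i^\ell$ with $(i,\ell)\neq(1,1)$, the graph $\sg{G}{W_1'}$ contains $x$ together with every $t_i^\ell, z_i^\ell, f_i^\ell$ for $(i,\ell)\neq(1,1)$. The remaining vertices of $V(H)$ can then be glued to this component through single $H$-edges: $z_1^1$ and $y$ via the two $E_1^1$-edges at~$z_1^1$; each $c_j^k\in\vc$ via its unique $M$-partner in~$\vft$ (guaranteed by (ii)); and $c^*$ via any of its $E^*$-edges.

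\textbf{Main obstacle.} The delicate point in (iii) is that $\sg{G}{W_1'}$ need not equal~$H$: an $M$- or $E^*$-edge may have all of its traversals inside the prefix segment $W_2'$ and thus be absent from~$W_1'$. Connectedness of $H$ must therefore be finished by attaching the missing vertices through isolated $H$-edges rather than through~$W_1'$ itself, which is exactly why (ii), ruling out $\vc$-to-$\vc$ edges in~$M$, is indispensable.
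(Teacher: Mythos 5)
Your proofs of (i) and (ii) are essentially the paper's own: the same count of imbalanced degrees against the tightness bound $|E|+b/2$ (your packaging via $\deg_M(v)\equiv\deg_E(v)\pmod 2$ is a clean way to force $\deg_M=1$ on $\vft\cup\vc$ and $0$ elsewhere), and the same inspection showing that every edge outside~$E_0$ meets a balanced $z_i^\ell$ or~$c^*$, and that the only $E_0$-edges missing $\vft$ meet the balanced $a_{ij},b_{ij}$. For (iii), however, you take a genuinely different and arguably simpler route. The paper passes to a tight tour~$T^*$ with the same doubled edge set whose prefix covering~$E_0$ has minimum length, and proves by an exchange argument that this prefix is an Euler tour of~$\sg{G}{E_0}$, so that the remaining closed segment covers exactly $(E\setminus E_0)\cup M$ and is itself Eulerian. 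You instead exploit that $z_1^1$ has degree two, is unmatched by~$M$, and lies only on the minimal path class~$E_1^1$: the tour therefore passes through it exactly once (it cannot start there, since its first edge would then precede~$E_0$), and this visit cleanly separates ``all of $E_0$ already traversed'' from ``all later classes still to come.'' The suffix~$W_1'$ is then one connected walk inside $(E\setminus E_0)\cup M$ reaching all $t_i^\ell,z_i^\ell,f_i^\ell$ with $(i,\ell)\ne(1,1)$, and you correctly finish by attaching the remaining vertices of~$V(H)$ ($z_1^1$ and the other endpoint of~$P_1^1$ via the $E_1^1$-edges, each $c_j^k$ via its $M$-partner in~$\vft$ guaranteed by (ii), and $c^*$ via an $E^*$-edge); your ``main obstacle'' remark pinpoints exactly why the naive claim that $W_1'$ covers all of~$H$ would fail. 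The trade-off: the paper's argument yields the stronger conclusion that $\sg{G}{(E\setminus E_0)\cup M}$ is Eulerian, but only connectivity is used later (in the proof of \cref{lem:covering}), so your version fully suffices; it is also more robust in that it avoids modifying the tour and the somewhat delicate minimality/exchange step.
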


\begin{proof}
  \eqref{matching}
  Since $T$~is a closed walk,
  all vertices in~$\sg{G}{T}$ are balanced,
  whereas its subgraph~$\sg{G}{E}=G$
  has $b$~imbalanced vertices. 
  Since $T$~contains at most $|E|+b/2$~edges,
  the graph~$\sg{G}{T}$ contains at most $b/2$~edges
  additionally to those in~$\sg{G}{E}$.
  Thus,
  $\sg{G}{T}$~contains a set~$M$ of at most~$b/2$ edges
  whose endpoints are the $b$~imbalanced vertices of~$\sg{G}{E}$.
  By \cref{obs:struct}\eqref{imbavfc},
  these are exactly the vertices~$\vft\cup\vc$.
  This is only possible if $M$~is a perfect matching on~$\vft\cup\vc$.
  Since each edge of~$G$ that is not in~$E_0$
  has at least one balanced endpoint
  (namely,  $c^*$~or one of the $z_i^\ell$),
  we easily get~$M\subseteq E_0$.
  
  \eqref{matchin2}
  The only edges in~$E_0$
  that have no endpoints in~$\vft$
  have one of the vertices
  of the form~$a_{ij}$ or~$b_{ij}$ as endpoints.
  Since these
  are only on the cycle~$Z_{ij}$ or~$\bar Z_{ij}$,
  they are balanced.
  Thus,
  $M$~cannot contain such edges.

  \eqref{t2match}
  Let $T^*$~be a tight tour for~$(G,\w,\prt,\ord)$
  such that
  $T^*$ visits exactly the edges in~$M$ twice and
  the minimal prefix~$T_1$ of~$T^*$
  traversing all edges in~$E_0$
  has minimum length
  (the tight tour~$T$ and \eqref{matching} witness
  the existence of~$T^*$).
  Let $T_2$~be the rest of~$T^*$.
  We will prove that
  $T_1$~is an Euler tour for~$\sg{G}{E_0}$.
  Then \eqref{t2match} follows
  since $\sg{G}{(E\setminus E_0)\cup M}$ is even Eulerian:
  $T_2$~visits all edges in~$(E\setminus E_0)\cup M$,
  since they are not visited by~$T_1$;
  $T_2$ is closed since~$T^*$ and its prefix~$T_1$ are;
  and,
  by choice of~$T^*$,
  $T_2$~does not visit any edge
  in~$(E\setminus E_0)\cup M$ more than once.
  It remains to prove that $T_1$~is indeed
  an Euler tour for $\sg{G}{E_0}$.

  We first prove that $T_1$~is a closed walk.
  By the minimality of~$T_1$
  and choice of~$\ord$,
  $T_1$~does not end in~$c^*$
  and
  does not contain edges from any class~$E_i^\ell$.
  It might contain edges from~$E^*$.
  Assume,
  for the sake of a contradiction,
  that
  $T_1$~starts at some vertex~$s$ and ends at some vertex~$t\ne s$.
  Then, $t$~is not balanced in $\sg{G}{T_1}$ but
  balanced in $\sg{G}{E_0}$ by \cref{obs:struct}\eqref{E0euler}.
  Thus, there is an edge~$e=\{t',t\}\in E^* \cup M$ on~$T_1$.
  The graph $\sg{G}{E(T_1)\setminus \{e\}}$
  contains~$E_0$,
  is connected,
  all its vertices except for~$s$ and~$t'$ are balanced,
  and it therefore has an Euler walk~$T_1'$.
  It follows that
  $(T_1', e, T_2)$~is another tight tour for~$G$
  visiting all edges with the same multiplicity as~$T^*$,
  yet its prefix~$T_1'$ containing~$E_0$ satisfies $|T_1'|<|T_1|$.
  This contradicts the choice of~$T^*$.
  \looseness=-1
  We now show that~$T_1$
  traverses each edge~$e\in E_0\cup E^*$
  at most once.
  Towards a contradiction,
  assume that it traverses $e=\{u,v\}$ twice.
  Then, $v\in\vft$ by \eqref{matchin2}.
  Thus,
  $v$~is not incident to any edges in~$E^*$
  and,
  because
  $v$~is balanced in $\sg{G}{E_0}$
  by \cref{obs:struct}\eqref{E0euler},
  it is also balanced in~$\sg{G}{E_0\cup E^*}$.
  Since $v$~is balanced in~$\sg{G}{E_0\cup E^*}$,
  balanced in~$\sg{G}{T_1}$,
  and $T_1$~traverses~$e$ twice,
  $T_1$~also traverses another edge incident to~$v$ twice,
  contradicting~\eqref{matching}.
  
  Finally,
  we prove that $T_1$~contains
  \emph{only} edges of~$E_0$.
  Towards a contradiction,
  assume that
  $T_1$~contains an edge~$\{c^*,c\}\in E^*$.
  Vertex~$c\in\vc{}$ is balanced in~$\sg{G}{T_1}$,
  yet not balanced in its subgraph~$\sg{G}{E_0\cup \{c^*,c\}}$
  by \cref{obs:struct}\eqref{E0euler}.
  Thus,
  $\sg{G}{T_1}$~contains
  some edge~$e\in E_0\cup E^*$ twice,
  which is impossible.
\end{proof}

\begin{figure*}
  \centering
  \begin{subfigure}{0.45\textwidth}
    \begin{tikzpicture}[x=1.5cm]
      \foreach \i in {1,...,5}
      {
        \node (t\i) at (6-\i,0) {$t_i^{6\ell-\i}$};
        \node (f\i) at (6-\i,-1) {$f_i^{6\ell-\i}$};
      }
      \node (t0) at (6,0) {$t_i^{6\ell}$};
      \node (f0) at (6,-1) {$f_i^{6\ell}$};
      \node (c1) at (4,1) {$c_j^2$};
      \node (c2) at (3,1) {$c_j^1$};
      
      \draw[enull] (c1)--(t2);
      \draw[enull] (c2)--(t3);

      \draw[enull] (t0)--(f0);
      \draw[enull] (t5)--(f5);

      \draw[enull] (t1)--(t0);
      \draw[matching] (t1)--(t2);
      \draw[enull] (t3)--(t2);
      \draw[matching] (t3)--(t4);
      \draw[enull] (t5)--(t4);
      
      \draw[matching] (f1)--(f0);
      \draw[enull] (f1)--(f2);
      \draw[matching] (f3)--(f2);
      \draw[enull] (f3)--(f4);
      \draw[matching] (f5)--(f4);
    \end{tikzpicture}
    \caption{Case 1: $\{f_i^{6\ell-3},f_i^{6\ell-2}\}$ is covered.}
    \label{matcha}
  \end{subfigure}
  \hfill
    \begin{subfigure}{0.45\textwidth}
    \begin{tikzpicture}[x=1.5cm]
      \foreach \i in {1,...,5}
      {
        \node (t\i) at (6-\i,0) {$t_i^{6\ell-\i}$};
        \node (f\i) at (6-\i,-1) {$f_i^{6\ell-\i}$};
      }
      \node (t0) at (6,0) {$t_i^{6\ell}$};
      \node (f0) at (6,-1) {$f_i^{6\ell}$};
      \node (c1) at (4,1) {$c_j^2$};
      \node (c2) at (3,1) {$c_j^1$};
      
      \draw[enull] (c1)--(t2);
      \draw[enull] (c2)--(t3);

      \draw[enull] (t0)--(f0);
      \draw[enull] (t5)--(f5);

      \draw[matching] (t1)--(t0);
      \draw[enull] (t1)--(t2);
      \draw[enull] (t3)--(t2);
      \draw[enull] (t3)--(t4);
      \draw[matching] (t5)--(t4);
      
      \draw[enull] (f1)--(f0);
      \draw[matching] (f1)--(f2);
      \draw[enull] (f3)--(f2);
      \draw[matching] (f3)--(f4);
      \draw[enull] (f5)--(f4);
    \end{tikzpicture}
    \caption{Case 2: $\{f_i^{6\ell-3},f_i^{6\ell-2}\}$ is not covered.}
    \label{matchb}
  \end{subfigure}
  \caption{The two cases in the proof of \cref{lem:covering}.
    Dotted edges are all the possibly present edges in~$E_0$
    available for inclusion in~$M$
    by \cref{lem:decomp}\eqref{matching}
    (the edge $\{t_i^{6\ell-5},f_i^{6\ell-5}\}$
    is present if~$\ell=1$,
    the edge $\{t_i^{6\ell},f_i^{6\ell}\}$
    is present if~$\ell=\mul i$ ).
    Including $\{f_i^{6\ell-3},f_i^{6\ell-2}\}$ in~$M$
    or excluding it from~$M$
    force all the bold edges into~$M$
    due to the fact that all vertices
    must be contained in some edge of~$M$
    and that edges
    drawn above each other cannot both be part of~$M$.
}
\label{fig:covering}
\end{figure*}
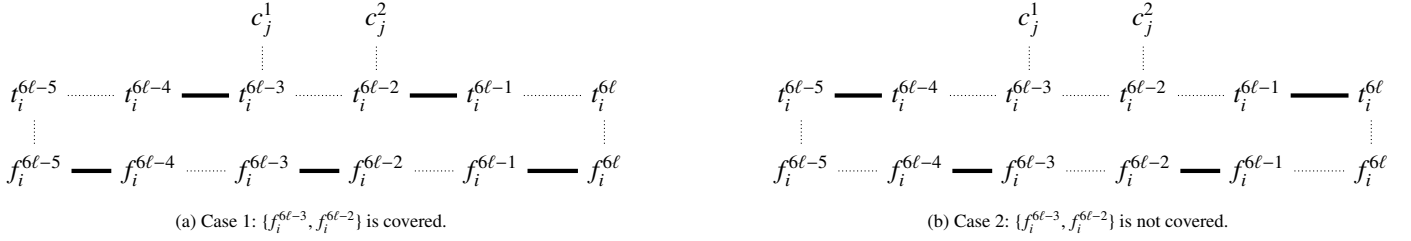

\pagebreak[3]
\noindent
We now show that the matching~$M$
from \cref{lem:decomp}
takes one of two possible forms in each
variable cycle~$X_{i}$.
This will correspond to setting a variable to ``true'' or ``false''.

\begin{definition}
  Let $i\in\{1,\dots,n\}$ and~$\ell\in \{1,\dots,\mul i\}$.
  We call an edge~$\{t_i^{6\ell-3},t_i^{6\ell-2}\}$ \emph{covered}
  if $\{t_i^{6\ell-3},t_i^{6\ell-2}\}\in M$
  or if there is a~$j\in\{1,\dots,m\}$
  such that both~$\{t_1^{6\ell-3},c_j^1\}$ and $\{t_1^{6\ell-2},c_j^2\}$
  are in~$M$.
  
  We call an edge~$\{f_i^{6\ell-3},f_i^{6\ell-2}\}$ \emph{covered}
  if $\{f_i^{6\ell-3},f_i^{6\ell-2}\}\in M$
  or if there is a~$j\in\{1,\dots,m\}$
  such that both~$\{f_1^{6\ell-3},c_j^1\}$ and $\{f_1^{6\ell-2},c_j^2\}$
  are in~$M$.
\end{definition}

\begin{lemma}
  \label[lemma]{lem:covering}
  For each $i\in\{1,\dots,n\}$,
  either all $\{t_i^{6\ell-3},t_i^{6\ell-2}\}$ are covered
  or all $\{f_i^{6\ell-3},f_i^{6\ell-2}\}$ are covered for
  $\ell\in \{1,\dots,\mul i\}$.
\end{lemma}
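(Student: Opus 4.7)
Fix $i\in\{1,\dots,n\}$. The plan is a local-to-global argument on the matching~$M$ from \cref{lem:decomp}: first, analyze~$M$ inside a single block of six consecutive positions of the cycle~$X_i$; then propagate the conclusion across blocks by tracking shared boundary vertices, and finally close the iteration around~$X_i$.

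I would first enumerate, for each~$f_i^\ell$ and~$t_i^\ell$ with $1<\ell<6\mul i$, the edges of~$E_0$ on which this vertex can appear in~$M$. Apart from the two row-neighbors in~$X_i$, the only further option is~$c_j^1$ or~$c_j^2$, and this is available precisely when $\ell\in\{6\ell'-3,6\ell'-2\}$ and the $\ell'$-th clause~$C_j$ containing~$x_i$ or~$\bar x_i$ contains~$\bar x_i$ (on the $f$-row) or~$x_i$ (on the $t$-row). The corner vertices~$t_i^1,f_i^1,t_i^{6\mul i},f_i^{6\mul i}$ additionally admit the rungs of~$X_i$ and the edges of~$Y_{i,i\bmod n+1}$ as partners. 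This short partner list drives all subsequent forcing.

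Next, for a single block $\ell\in\{1,\dots,\mul i\}$, I would case-split on whether $e_\ell:=\{f_i^{6\ell-3},f_i^{6\ell-2}\}\in M$, as depicted in \cref{fig:covering}. If $e_\ell\in M$, its two endpoints are saturated, so the partner enumeration forces $\{f_i^{6\ell-5},f_i^{6\ell-4}\},\{f_i^{6\ell-1},f_i^{6\ell}\}\in M$; a parallel forcing on the $t$-row then puts $\{t_i^{6\ell-4},t_i^{6\ell-3}\},\{t_i^{6\ell-2},t_i^{6\ell-1}\}\in M$, so that $\{t_i^{6\ell-3},t_i^{6\ell-2}\}$ is neither in~$M$ nor completed by a $c_j$-detour, i.e., uncovered. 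If $e_\ell\notin M$, the same enumeration forces $\{f_i^{6\ell-3},f_i^{6\ell-4}\},\{f_i^{6\ell-2},f_i^{6\ell-1}\}\in M$ and, symmetrically, $\{t_i^{6\ell-5},t_i^{6\ell-4}\},\{t_i^{6\ell-1},t_i^{6\ell}\}\in M$; the only way left to saturate~$t_i^{6\ell-3}$ and~$t_i^{6\ell-2}$ is via their direct edge or a $c_j$-detour of~$Z_{ij}$, so $\{t_i^{6\ell-3},t_i^{6\ell-2}\}$ is covered.

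Finally, I would propagate the case across adjacent blocks by tracking the boundary pair~$f_i^{6\ell},f_i^{6\ell+1}$: in the first case $f_i^{6\ell}$ is saturated by $\{f_i^{6\ell-1},f_i^{6\ell}\}$, so~$f_i^{6\ell+1}$ must be saturated inside block~$\ell+1$, forcing $e_{\ell+1}\in M$; in the second case~$f_i^{6\ell}$'s only free partner is~$f_i^{6\ell+1}$, so $\{f_i^{6\ell},f_i^{6\ell+1}\}\in M$ and $e_{\ell+1}\notin M$. Iterating around~$X_i$ and closing through the rungs and $Y_{i,i\bmod n+1}$ at the corners $\ell\in\{1,\mul i\}$ forces the same case at every block, yielding the claimed dichotomy.

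The main technical obstacle I anticipate is a hybrid configuration in which $\bar x_i\in C_j$ and exactly one of $\{f_i^{6\ell-3},c_j^1\},\{f_i^{6\ell-2},c_j^2\}$ lies in~$M$: then~$e_\ell$ is still uncovered but the local $f$-row forcing above breaks, since one of the forced row edges is replaced by a $c_j$-edge and the propagation types at the two boundaries of the block are mismatched. I would rule this out globally by tracking the remaining~$c_j^k$: it must be saturated through another literal of~$C_j$, propagating the hybrid onto a clause-neighboring cycle~$X_{i'}$, and iterating this coupling contradicts the local forcing there.
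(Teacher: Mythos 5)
Your overall architecture (a per-block case distinction matching \cref{fig:covering}, followed by propagation along the rows of~$X_i$) mirrors the paper's, but there is a genuine gap at the heart of the block analysis: the step you call ``a parallel forcing on the $t$-row.'' Your partner enumeration uses only that $M$ is a perfect matching on $\vft\cup\vc$ contained in~$E_0$ (\cref{lem:decomp}\eqref{matching} and~\eqref{matchin2}), and under that information alone the $t$-row and the $f$-row of~$X_i$ are completely decoupled at interior positions: an interior $t_i^h$ and $f_i^h$ share no neighbour in~$E_0$ (a clause attaches to only one of the two rows, since clauses containing both $x_i$ and $\bar x_i$ were deleted), and the rows meet only at the rungs $\{t_i^1,f_i^1\}$, $\{t_i^{6\mul i},f_i^{6\mul i}\}$ and the $Y$\hyp cycles. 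So knowing the $f$-row matching forces nothing on the $t$-row. Concretely, for $\mul i=1$ the set $\{t_i^1,f_i^1\},\{t_i^2,t_i^3\},\{t_i^4,t_i^5\},\{t_i^6,f_i^6\},\{f_i^2,f_i^3\},\{f_i^4,f_i^5\}$ is a perfect matching inside~$E_0$ on the gadget's $\vft$-vertices under which \emph{neither} $\{t_i^3,t_i^4\}$ nor $\{f_i^3,f_i^4\}$ is covered; the lemma is false for matchings satisfying only the constraints you invoke. The missing ingredient is \cref{lem:decomp}\eqref{t2match}: since $\sg{G}{(E\setminus E_0)\cup M}$ is connected and the paths $P_i^h$ are separate components of $\sg{G}{E\setminus E_0}$, at most one of $\{t_i^h,t_i^{h+1}\}$ and $\{f_i^h,f_i^{h+1}\}$ can lie in~$M$ for each~$h$ (otherwise no $M$-edge would leave $P_i^h\cup P_i^{h+1}$). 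This is exactly what couples the two rows, makes your forced chains valid, and excludes the matching above.

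Once that coupling is in place, your ``main technical obstacle'' also dissolves, and the global clause-propagation you sketch to handle it is both unnecessary and unconvincing as stated: a clause has at most three literals, and I see no reason the half-detour cannot simply be completed consistently on another variable's cycle, so the iteration need not terminate in a contradiction. The paper instead observes that at most one of the two pairs $\{t_i^{6\ell-3},t_i^{6\ell-2}\}$ and $\{f_i^{6\ell-3},f_i^{6\ell-2}\}$ is attached to $\{c_j^1,c_j^2\}$ at all, so on the unattached row ``covered'' just means ``in~$M$''; and on the attached row, after the four flanking row edges are forced as in \cref{matchb}, each of the two attached vertices has only two admissible partners left (its block-mate and its $c_j$-endpoint), so a half-detour would leave its block-mate unmatchable. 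Everything is local once \cref{lem:decomp}\eqref{t2match} is on the table; without it, your proof does not go through.
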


\begin{proof}
\looseness=-1
  For any~$i\in\{1,\dots,n\}$
  and~$\ell\in\{1,\dots,\mul i\}$,
  we first show that exactly one of
  $\{t_i^{6\ell-2},t_i^{6\ell-3}\}$ and
  $\{f_i^{6\ell-2},f_i^{6\ell-3}\}$
  is covered.
  Note that, by \cref{constr:3satred},
  at most one of these pairs of vertices
  is attached to~$\{c_j^1,c_j^2\}$ for any~$j\in\{1,\dots,m\}$.
  Without loss of generality,
  let this be $\{t_i^{6\ell-2},t_i^{6\ell-3}\}$.
  The other case is symmetric.

  Denote~$R:=E\setminus E_0$.
  By \cref{lem:decomp}\eqref{t2match},
  $\sg{G}{R\cup M}$ is connected.
  Thus,
  there is at least one edge of~$M$
  leaving any subset of connected components of~$\sg{G}{R}$
  and,
  for each~$h\in\{6\ell-5,\dots,6\ell-1\}$,
  only one of $\{t_i^h,t_i^{h+1}\}$ and
  $\{f_i^h,f_i^{h+1}\}$ is in~$M$:
  otherwise,
  the matching~$M$ could not contain any edge
  leaving the set of connected
  components $\{\{t_i^h,z_i^h,f_i^h\},\{t_i^{h+1},z_i^{h+1},f_i^{h+1}\}\}$.
  We also exploit
  that,
  by \cref{lem:decomp}\eqref{matching},
  all vertices in~$\vft$ must be incident to an edge of~$M$.

  \looseness=-1
  We now distinguish two cases,
  illustrated in \cref{fig:covering}.
  First,
  assume that $\{f_i^{6\ell-3},f_i^{6\ell-2}\}$ is covered,
  that is, in~$M$.
  Then all bold edges shown in \cref{matcha}
  are in~$M$.
  Thus,
  the edge  $\{t_i^{6\ell-2},t_i^{6\ell-3}\}$ is not covered.
  If
  the edge
  $\{f_i^{6\ell-3},f_i^{6\ell-2}\}$ is \emph{not} covered,
  that is, not in~$M$,
  then all bold edges shown in \cref{matchb} are in~$M$.
  To match the vertices~$t_i^{6\ell-3}$ and~$t_i^{6\ell-2}$,
  one either has $\{t_i^{6\ell-3},t_i^{6\ell-2}\}\in M$
  or $\{\{t_i^{6\ell-3},c_j^1\},\{t_i^{6\ell-2},c_j^2\}\}\subseteq M$.
  That is,
  $\{t_i^{6\ell-3},t_i^{6\ell-2}\}$ is covered.

  Finally,
  towards a contradiction,
  assume that there are~$\ell,\ell'$
  such that
  $\{f_i^{6\ell-2},f_i^{6\ell-3}\}$
  and
  $\{t_i^{6\ell'-2},t_i^{6\ell'-3}\}$
  are covered.
  Then
  we can choose~$\ell,\ell'$ so that~$|\ell-\ell'|=1$.
  Assume $\ell'=\ell+1$,
  the other case is symmetric.
  Then,
  as illustrated in \cref{matcha},
  vertex~$t_i^{6\ell}$ has to be matched to~$t_i^{6\ell'-5}$
  (there is no edge~$\{t_i^{6\ell},f_i^{6\ell}\}$ in this case
  by \cref{constr:3satred},
  since $\ell<\mul i$).
  However,
  vertex~$t_i^{6\ell'-5}$ is already matched to~$t_i^{6\ell'-4}$,
  so that this is impossible.
\end{proof}

\noindent
We can now easily prove
that, since $I_\varphi$ has a \chp{} tour~$T$,
the formula~$\varphi$ is satisfiable,
thus concluding the proof of \cref{prop:equivalence}.
By \cref{lem:decomp}\eqref{matching} and \eqref{matchin2},
for each clause~$C_j$ of~$\varphi$,
the vertices~$c_j^1$ and~$c_j^2$
are matched to vertices in~$\vft$ by~$M$.
By \cref{constr:3satred},
$c_j^1$ can only be matched to~$t_i^{6\ell-3}$
or~$f_i^{6\ell-3}$ for some~$i\in\{1,\dots,n\}$ and~$\ell\in\{1,\dots,\mul i\}$.
By \cref{lem:covering},
if $c_j^1$ is matched to~$t_i^{6\ell-3}$,
then $t_i^{6\ell-2}$ is matched to~$c_j^2$
and the edges $\{t_i^{6\ell-3},t_i^{6\ell-2}\}$ are covered
for \emph{all}~$\ell\in\{1,\dots,\mul i\}$,
whereas $\{f_i^{6\ell-3},f_i^{6\ell-2}\}$
is \emph{not} covered for \emph{any}~$\ell\in\{1,\dots,\mul i\}$.
Thus,
clause~$C_j$
(and all other clauses containing~$x_i$)
can be satisfied by
setting variable~$x_i$ to~``true''.
If, on the other hand,
$c_j^1$ is matched to~$f_i^{6\ell-3}$,
then $f_i^{6\ell-2}$ is matched to~$c_j^2$
and the edges $\{f_i^{6\ell-3},f_i^{6\ell-2}\}$ are covered
for \emph{all}~$\ell\in\{1,\dots,\mul i\}$,
whereas $\{t_i^{6\ell-3},t_i^{6\ell-2}\}$
is \emph{not} covered for \emph{any}~$\ell\in\{1,\dots,\mul i\}$.
Thus,
clause~$C_j$
(and all other clauses containing~$\bar x_i$)
can be satisfied by setting~$x_i$ to~``false''.

\section{Relation between \hcpl{} and
  the Rural Postman}
\noindent
\citet{DST87} showed
how to reduce \hcpcl{}
to polynomial\hyp time solvable
special cases of the following problem.

\label{sec:RPP}
\begin{problem}[\stRPPlong{}, \stRPP]\label[problem]{prob:stRPP}
  \label{prob:strpp}
  \leavevmode
  \begin{compactdesc}  
  \item[Input:] An undirected graph~$G=(V,E)$, 
    edge weights~$\w\colon E\to \N$, vertices 
    $s,t\in V$, and a subset~$R \subseteq E$ of \emph{required edges}.
    
  \item[Find:] A walk~$W^*$ of minimum total weight~$\w(W^*)$,
    starting in~$s$, ending in~$t$, and traversing all edges of~$R$.
  \end{compactdesc}
\end{problem}
\noindent
In general,
\stRPP{} is strongly NP\hyp hard,
as well as the better known
Rural Postman Problem (\RPP), where the goal 
is to find a \emph{closed} walk~\cite{LR76}.
\citet{DST87} reduce \hcpcl{}
to multiple \stRPP{} instances
in which the subgraph~$\sg{G}{R}$ is connected.
Since this case of \stRPP{}
is polynomially\hyp time solvable,
this yields a polynomial\hyp time algorithm for \hcpcl{}  \cite{DST87}.

We now show that,
while applying the same construction to \hcpl{} does not
yield polynomial\hyp time solvable instances of \stRPP{},
it allows to transfer
running times,
approximation factors,
and error probabilities
of \stRPP{} algorithms to \hcpl{}.
This is in contrast
to a construction due to \citet{CGGL04},
who showed a polynomial\hyp time reduction
of \hcpl{} to \RPP{}
that does \emph{not} allow to transfer
  approximation factors:
  it introduces very heavy required edges,
  which always contribute to the goal function
  and thus make bad approximate solutions ``look'' good.
We now describe the construction of \citet{DST87}. %

	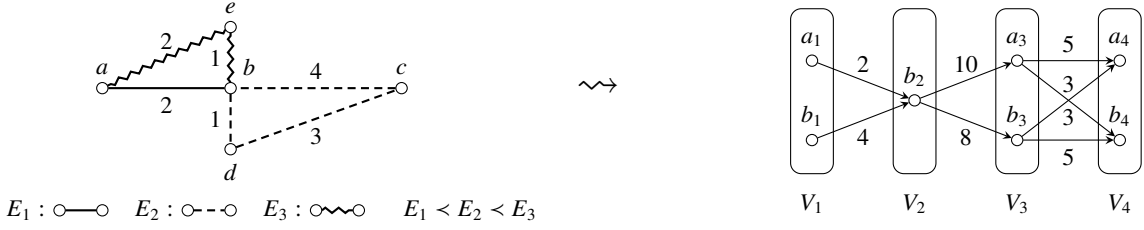
\begin{figure*}
	\centering
	
	\begin{tikzpicture}[scale=0.9, every node/.style={scale=0.9}]
	
	\def\xr{1.25}
	\def\yr{0.9}
	\tikzstyle{xnode}+=[circle, scale=1/2, draw]
	
	\tikzstyle{edge1}+=[-,thick]
	\tikzstyle{edge2}+=[-,thick, densely dashed]
	\tikzstyle{edge3}+=[snake=zigzag, segment amplitude=1pt, segment length=5pt, thick]
	\tikzstyle{arrow} = [->,>=stealth]

	\begin{scope}[xshift=-2*\xr cm]
	
	\node (e) at (0*\xr,3*\yr)[xnode,label=90:{$e$}]{};
	\node (a) at (-1.5*\xr,2*\yr)[xnode,label=90:{$a$}]{};
	\node (b) at (0*\xr,2*\yr)[xnode,label=45:{$b$}]{};
	\node (c) at (2*\xr,2*\yr)[xnode,label=90:{$c$}]{};
	\node (d) at (0,\yr)[xnode,label=-90:{$d$}]{};
	
	\draw[edge1] (a) -- node[midway,below]{2} (b);
	\draw[edge2] (b) -- node[left]{1} (d) --node[midway,below]{3} (c) --node[midway,above]{4} (b);
	\draw[edge3] (b) -- node[left]{1} (e) -- node[midway,above]{2} (a);
	
	\end{scope}	

	\begin{scope}[xshift=-4*\xr cm, yshift=0*\yr cm]
	
	\def\edlen{0.5*\xr}
	
	\node(e1) at (0.5*\xr - \edlen,0)[xnode,label=180:{$E_1:$}]{};
	\node(e11) at (0.5*\xr,0)[xnode]{};
	\draw[edge1] (e1) -- (e11);
	\node(e2) at (2*\xr - \edlen,0)[xnode,label=180:{$E_2:$}]{};
	\node(e22) at (2*\xr,0)[xnode]{};
	\draw[edge2] (e2) -- (e22);
	\node(e3) at (3.5*\xr-\edlen,0)[xnode,label=180:{$E_3:$}]{};
	\node(e33) at (3.5*\xr,0)[xnode]{};
	\draw[edge3] (e3) -- (e33);
	\node at (4.8*\xr,0)[]{$E_1 \ord E_2 \ord E_3$};
	
	\end{scope}
	
	\node at (2.3*\xr,2*\yr)[scale=1.8]{$\leadsto$};
	
	\begin{scope}[xshift=4.8*\xr cm]
	
	\def\xset{0.25*\xr} %
	\def\ldist{0.6*\xr} %
	\def\hdist{0.65*\yr} %
	
	\draw[rounded corners] (-\xset, 3.3*\yr) rectangle (\xset,0.6*\yr) node[below,xshift=-9pt, yshift=-5pt]{$V_1$};

	\node (v11) at (0, 1.8*\yr + \hdist)[xnode,label=90:{$a_1$}]{};
	\node (v12) at (0, 1.8*\yr - \hdist)[xnode,label=90:{$b_1$}]{};

	\draw[rounded corners] (-\xset+2*\ldist,3.3*\yr) rectangle (\xset+2*\ldist, 0.6*\yr) node[below,xshift=-9pt, yshift=-5pt]{$V_2$};

	\node (v23) at (2*\ldist,1.8*\yr)[xnode,label=90:{$b_2$}]{};

	\draw[rounded corners] (-\xset+4*\ldist,3.3*\yr) rectangle (\xset+4*\ldist,0.6*\yr) node[below,xshift=-9pt, yshift=-5pt]{$V_3$};

	\node (v32) at (4*\ldist,1.8*\yr + \hdist)[xnode,label=90:{$a_3$}]{};
	\node (v33) at (4*\ldist,1.8*\yr - \hdist)[xnode,label=90:{$b_3$}]{};

	\draw[rounded corners] (-\xset+6*\ldist,3.3*\yr) rectangle (\xset+6*\ldist,0.6*\yr) node[below,xshift=-9pt, yshift=-5pt]{$V_4$};
	
	\node (v41) at (6*\ldist,1.8*\yr + \hdist)[xnode,label=90:{$a_4$}]{};
	\node (v42) at (6*\ldist,1.8*\yr - \hdist)[xnode,label=90:{$b_4$}]{};

\draw[arrow] (v11) -- node[midway,above]{2} (v23);
\draw[arrow] (v12) -- node[midway,below]{4} (v23);
\draw[arrow] (v23) -- node[midway,above]{10}(v32);
\draw[arrow] (v23) -- node[midway,below]{8}(v33);
\draw[arrow] (v33) -- node[midway,below]{3}(v41);
\draw[arrow] (v33) -- node[midway,below]{5}(v42);
\draw[arrow] (v32) -- node[midway,above]{5}(v41);
\draw[arrow] (v32) -- node[midway,above]{3}(v42);

\end{scope}

\end{tikzpicture}
\caption{Illustration for \cref{constr:Gstar}: from a graph~$G$ 
  with $k=3$~edge classes (on the left),
  \cref{constr:Gstar} constructs
  a graph~$\Gamma$ with $k+1$~layers (on the right).
  Note that, for example, 
  vertex $b \in V(G)$ is the only vertex in~$V(E_2)$ incident to edges 
  of previous classes, thus, its copy~$b_2 \in V(\Gamma)$
  is the only vertex in layer~$V_2$.}
\label{fig:Gstar}
\end{figure*}

\begin{definition}
  \label[definition]{def:Ruvi}
  In this section,
  we denote the edge classes of
  \hcpl{} instances~$(G,\w,\prt,\ord)$
  by~$E_1,\dots,\allowbreak E_k$,
  where $E_i \ord E_j$ if and only if $1 \le i<j \le k$. 

  By $R[u,v,i]$,
  we denote the \stRPP{} instance
  of finding a minimum\hyp weight
  walk between the
  vertices~$u$ and~$v$ in
  $\sg{G}{E_1 \cup \dots \cup E_i}$
  traversing all edges in~$E_i$.
  By $P[u,v,i]$,
  we denote an arbitrary optimal solution to~$R[u,v,i]$.
\end{definition}

\begin{construction}\label[construction]{constr:Gstar}
  From a \hcpl{} instance~$(G,\w,\prt,\ord)$,
  construct a directed arc\hyp weighted
  graph~$\Gamma=(V_{\Gamma},A_{\Gamma})$ as illustrated in \cref{fig:Gstar}:
  The vertex set~$V_{\Gamma}=\bigcup_{i=1}^{k+1} V_i$ 
  is a union of \emph{layers~$V_i$}.
  For each~$i\in\{2,\dots,k\}$,
  layer~$V_i$
  contains a copy of each vertex in~$G$
  that is incident to an edge of~$E_i$ and of
  any predecessor class.
Namely,
  \begin{align*}
    V_1&=\{ u_{1} \mid u \in V(E_1) \},
    &V_{k+1}&=\{ u_{k+1} \mid u \in V(E_1) \},\\
    V_i&=\biggl\{ u_i \biggm| u \in V(E_i)\cap \bigcup_{j=1}^{i-1} V(E_{j}) \biggr\}&\text{for }i&\in\{2,\dots,k\}.
   \end{align*}

  \noindent For each pair of vertices~$u_i \in V_{i}$ and $ v_{i+1} \in V_{i+1}$,
  where $i\in\{1,\dots,k\}$,
  there is an arc~$(u_i,v_{i+1}) \in A_{\Gamma}$
  of weight $\w_{\Gamma}(u_{i},v_{i+1})=\w(P[u,v,i])$.
  If $P[u,v,i]$~does not exist,
  there is no arc $(u_i,v_{i+1})$.
\end{construction}

\begin{proposition}[\citet{DST87}]
  \label[proposition]{prop:lpath}
  Let $I:=(G,\w,\prt,\ord)$~be an \hcpl{} instance
  and $\Gamma$~be constructed from~$I$ by
  \cref{constr:Gstar}.
  Then,
  the weight of an optimal solution to~$I$
  coincides with the %
  weight of a least-weight
  \emph{\Gpath{}} in~$\Gamma$,
  where a \emph{\Gpath{}} in~$\Gamma$ is
  a path from~$v_1 \in V_1$ to~$v_{k+1} \in V_{k+1}$
  such that $v_1$~and $v_{k+1}$~are copies of
  the same vertex $v \in V(E_1)$.

  In particular,
  each \Gpath{} in~$\Gamma$ %
  has the form~$J=(v_1, y_2^{2}, y_3^3,\allowbreak
  \dots,\allowbreak y_k^k, v_{k+1})$,
  where $y_i^i \in V_i$ for $i\in\{2,\dots,k\}$
  and
  concatenating the corresponding walks $P[v,y^2,1], 
  P[y^2,y^3,2], \allowbreak \dots,\allowbreak P[y^k,v,k]$ yields
  a feasible solution~$W_J$ of weight
  $\w(W_J) = \wG(J)$ for~$I$.
\end{proposition}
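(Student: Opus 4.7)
The plan is to establish the equality by proving both directions of inequality: (i) every \Gpath{} in~$\Gamma$ yields a feasible \hcpl{} solution of the same weight, and (ii) every feasible \hcpl{} solution yields a \Gpath{} of at most the same weight. The ``in particular'' claim about the form of $J$ and of the associated walk~$W_J$ will drop out of direction~(i) as a byproduct.

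For direction~(i), I would take an arbitrary \Gpath{} $J=(v_1,y_2^2,\ldots,y_k^k,v_{k+1})$ and form $W_J$ by concatenating the optimal \stRPP{} walks $P[v,y^2,1]$, $P[y^2,y^3,2]$, $\ldots$, $P[y^k,v,k]$. Since $v_1$ and $v_{k+1}$ are copies of the same $v \in V(E_1)$, this concatenation is a closed walk. By \cref{def:Ruvi}, each $P[y^i,y^{i+1},i]$ lies in $\sg{G}{E_1\cup\cdots\cup E_i}$ and traverses every edge of~$E_i$, so $W_J$ covers all of~$E$ and respects the precedence: the first traversal of any $E_i$-edge occurs inside the $i$-th sub-walk, by which point classes $E_1,\ldots,E_{i-1}$ have already been completely traversed in the earlier sub-walks. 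The equality $\w(W_J)=\wG(J)$ follows directly from how arc weights are defined in \cref{constr:Gstar}.

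Direction~(ii) is the more delicate step. Starting from an optimal feasible walk~$W^*$, I first note that its initial vertex~$v$ must belong to~$V(E_1)$, because precedence forces the very first edge traversed to come from~$E_1$. For each $i\in\{2,\ldots,k\}$, let $y^i$ denote the vertex of~$W^*$ reached immediately before the first $E_i$-edge is traversed, and set $y^1=y^{k+1}=v$. Splitting $W^*$ at these transition vertices produces sub-walks~$W_i^*$ from~$y^i$ to~$y^{i+1}$. I then use the precedence order in both directions: $W_i^*$ contains no edge from~$E_\ell$ with $\ell>i$ (none can appear before the first $E_{i+1}$-edge) and covers every $E_i$-edge at least once (all their first traversals must precede the first $E_{i+1}$-edge). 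Hence $W_i^*$ is feasible for $R[y^i,y^{i+1},i]$, giving $\w(W_i^*)\ge \w(P[y^i,y^{i+1},i])$. Since $y^i$ is simultaneously incident to the last edge of~$W_{i-1}^*$ (from some predecessor class) and to the first $E_i$-edge of~$W_i^*$, it lies in~$V_i$ per \cref{constr:Gstar}, so $(v_1,y_2^2,\ldots,y_k^k,v_{k+1})$ is a bona fide \Gpath{}; summing weights yields $\w(W^*)\ge \wG(J^*)$.

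The main obstacle I foresee is the decomposition step in direction~(ii): correctly picking the transition vertices and carefully exploiting the precedence order to show that each sub-walk forms a valid input for the corresponding \stRPP{} instance. Degenerate cases (an empty class, or a closed walk whose starting point can be freely rotated) need minor attention but cause no real difficulty. Once both inequalities are in place, the equality of optima follows, and the ``in particular'' statement is precisely the construction used in direction~(i).
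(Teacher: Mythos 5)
The paper does not prove \cref{prop:lpath} itself; it is stated with an attribution to \citet{DST87} and no proof environment follows it, so there is no in-paper argument to compare against. Your two-directional argument is the natural and correct one: direction~(i) is immediate from \cref{def:Ruvi,constr:Gstar}, and in direction~(ii) you correctly identify the key points --- splitting the optimal closed walk at the vertices reached just before the first traversal of an edge of each class, using the linear order in both directions to show each piece~$W_i^*$ stays inside $\sg{G}{E_1\cup\dots\cup E_i}$ and covers all of~$E_i$, and verifying $y^i\in V(E_i)\cap\bigcup_{j<i}V(E_j)$ via the last edge of~$W_{i-1}^*$ so that $y^i_i$ really lies in layer~$V_i$. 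The only points deserving an explicit word are that every class is nonempty (so the first edge of~$W^*$ is in~$E_1$ and each~$W_{i-1}^*$ has a last edge), which you flag as a degenerate case; I see no gap.
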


\noindent
\cref{constr:Gstar} can be used
to solve \hcpcl{} in $O(kn^5)$~time:
$\Gamma$~has at most $kn^2$~arcs,
the weight of each is computed by
solving an \stRPP{} instance~$R[u,v,i]$,
which works in $O(n^3)$~time since the set~$E_i$
of required edges is connected \citep{DST87}.
It remains to find a \Gpath{} in~$\Gamma$.
This can be done in $O(kn^3)$~time
by $n$~times calling a linear\hyp time
single\hyp source
shortest\hyp path algorithm for directed acyclic graphs.

However,
when applied to \hcpl{},
\cref{constr:Gstar}
gets to solve \stRPP{} instances~$R[u,v,i]$
where the set of required edges~$E_i$
might be disconnected.
Since we do not know how to solve them in polynomial time,
in \cref{sec:apx,sec:fpt},
we will solve them using approximation algorithms
and randomized fixed\hyp parameter algorithms.
Their performance guarantees 
carry over to \hcpl{} as follows.

\begin{lemma}\label[lemma]{thm:strppApprox}
  Let $I=(G,\w,\prt,\ord)$~be an \hcpl{} instance.
  Assume that there is an algorithm
  running in $\rpptime$~time that,
  given any \stRPP{} instance~$R[u,v,i]$
  (cf.\ \cref{def:Ruvi}),
  outputs an $\alpha$\hyp approximate solution for~$R[u,v,i]$
  with probability at least~$1-p$.

  Then, there is an algorithm running in
  $O(kn^2\rpptime + kn^3)$~time
  that returns an
  $\alpha$\hyp approximate solution for~$I$
  with probability at least $1-pk$.
\end{lemma}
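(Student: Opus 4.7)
The plan is to execute Construction~\ref{constr:Gstar}, but with the exact arc weights $\wG$ replaced by approximate ones~$\wtG$: for each potential arc~$(u_i,v_{i+1})$ of~$\Gamma$, I run the assumed algorithm once on~$R[u,v,i]$; if it outputs a walk~$\tilde P[u,v,i]$, I put the arc~$(u_i,v_{i+1})$ into~$\Gamma$ with weight $\wtG(u_i,v_{i+1}):=\w(\tilde P[u,v,i])$ and store the walk~$\tilde P[u,v,i]$ itself. I then compute a least-weight \Gpath{}~$\hat J$ in~$\Gamma$ under~$\wtG$ and return the closed walk~$\hat W$ obtained by concatenating the stored \stRPP{}-walks along~$\hat J$, exactly as in Proposition~\ref{prop:lpath}. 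By that proposition, $\hat W$ is a feasible \hcpl{} solution and $\w(\hat W)=\wtG(\hat J)$ holds by construction.

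The running-time bookkeeping is straightforward: $\Gamma$ has $O(kn^2)$ candidate arcs, each weight requires one call of cost~$\rpptime$, and the least-weight \Gpath{} in the DAG~$\Gamma$ can be found in $O(kn^3)$~time by $n$~single-source shortest-path computations, exactly as already noted for \hcpcl{}; this gives the claimed $O(kn^2\rpptime+kn^3)$ total.

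The core of the argument is the combined approximation and probability analysis. Fix a \Gpath{}~$J^*$ in~$\Gamma$ whose true weight $\wG(J^*)$ equals the \hcpl{} optimum~$\mathrm{OPT}(I)$; such a path exists by Proposition~\ref{prop:lpath}. Since $J^*$ uses exactly $k$~arcs, a union bound over the success events on these arcs shows that with probability at least $1-pk$ every arc~$(u_i,v_{i+1})$ of~$J^*$ satisfies $\wtG(u_i,v_{i+1})\le\alpha\,\wG(u_i,v_{i+1})$, and hence $\wtG(J^*)\le\alpha\,\wG(J^*)=\alpha\cdot\mathrm{OPT}(I)$. Meanwhile, \emph{deterministically} $\hat J$ minimizes $\wtG$ among all \Gpath{}s, so $\wtG(\hat J)\le\wtG(J^*)$. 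Chaining the two inequalities gives $\w(\hat W)=\wtG(\hat J)\le\alpha\cdot\mathrm{OPT}(I)$ with probability at least~$1-pk$.

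The one point that I expect to need the most care is keeping the failure probability at~$pk$ rather than the naive $O(pkn^2)$ coming from one randomized call per arc of~$\Gamma$. This saving is possible because the key inequality $\wtG(\hat J)\le\wtG(J^*)$ holds regardless of what the algorithm produces on arcs outside~$J^*$: a bad outcome there only makes the corresponding arc look heavier than necessary, which can never push $\wtG(\hat J)$ above $\wtG(J^*)$. Consequently, only the $k$~calls corresponding to the single fixed witness path~$J^*$ need to succeed, and the probability bound~$1-pk$ follows.
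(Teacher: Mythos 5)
Your proposal is correct and follows essentially the same route as the paper's proof: build $\widetilde\Gamma$ with weights from the approximate \stRPP{} calls, take the least-weight \Gpath{} there, and apply the union bound only to the $k$~arcs of a fixed optimal witness path~$J^*$ in the exact graph~$\Gamma$ (the paper phrases this last step by contraposition, you phrase it directly, but it is the same argument). The only detail the paper makes explicit that you leave implicit is that one should verify feasibility of each returned walk so that a low-probability failure of the \stRPP{} algorithm cannot silently insert an arc whose stored walk is not actually a valid solution to~$R[u,v,i]$.
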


\begin{proof}
  Let $\algA$~denote the assumed
  randomized approximation algorithm for
  solving \stRPP{} instances~$R[u,v,i]$.
  Since we can check
  the feasibility of any solution returned
  by $\algA$ in linear time,
  we can assume that $\algA$~makes
  only one\hyp sided errors:
  For an infeasiable instance~$R[u,v,i]$,
  it returns nothing.
  For a feasible instance~$R[u,v,i]$,
  with probability at most~$p$,
  it may return nothing or produce a solution
  that is more expensive than an $\alpha$\hyp approximate solution.
  Moreover,
  since feasibility of~$I$ is easy to check \citep{DST87},
  we will assume that $I$~has a feasible solution.
  Then we compute a solution to~$I$ as follows.

  \looseness=-1
  Construct an arc\hyp weighted
 directed graph 
  $\widetilde\Gamma=(\widetilde V_\Gamma, \widetilde A_\Gamma)$ 
  from~$G$ as described in \cref{constr:Gstar},
  yet %
  for each~$i\in\{1,\allowbreak\dots,\allowbreak k\}$
  and every~$u_i \in V_{i}$ and~$v_{i+1} \in V_{i+1}$,
  the weight
  $\wtG(u_{i},v_{i+1})=\allowbreak\w(\widetilde P[u,v,i])$,
  where $\widetilde P\bigl[u,v,i\bigl]$
  is computed by applying~\algA{}
  to the \stRPP{} instance~$R[u,v,i]$
  (if \algA{} fails to produce a solution,
  then let there be no arc~$(u_i,v_{i+1})$ in~$\widetilde\Gamma$).
  Finally,
  try to compute a least\hyp weight \Gpath{}~$J$
  in~$\widetilde\Gamma$.
  If it exists,
  then the corresponding closed walk~$W_J$
  is a feasible solution of weight $\w(W_J) = \wtG(J)$
  for~$I$.
  The running time of the
  whole procedure is~$O(kn^2\rpptime + kn^3)$
  since the graph~$\widetilde\Gamma$ has $O(kn^2)$~arcs,
  the weight of each can be computed in $\rpptime$~time,
  and the least\hyp weight \Gpath{} in~$\Gamma$
  can finally be found by $n$~times
  applying a single\hyp source
  shortest\hyp path algorithm for directed acyclic graphs.
  It remains to analyze the probability
  that the procedure
  returns an $\alpha$\hyp approximate
  solution for~$I$.

  To this end,
  let $W^*$~be an optimal solution to~$I$,
  $\Gamma=(V_\Gamma,A_\Gamma)$~be constructed by
  \cref{constr:Gstar} from~$I$,
  and $J^* = (x_1, y_2^{2},y_3^3,\ldots,\allowbreak
  y_k^k,\allowbreak  x_{k+1})$
  be a least\hyp weight \Gpath{} in~$\Gamma$.
  First,
  assume that $\algA{}$ indeed
  produced an $\alpha$\hyp approximate solution
  for each instance~$R[u,v,i]$
  corresponding to any arc~$(u_i,v_{i+1})$ on~$J^*$.
  Then,
  for each arc~$(u_i,v_{i+1})$ on~$J^*$,
  \[
    \wtG(u_{i},v_{i+1})=\w(\widetilde P[u,v,i])\leq \alpha\w(P[u,v,i])=\alpha\wG(u_{i},v_{i+1})
  \]
  and $J^*$ witnesses the existence
  of the computed
  least\hyp weight \Gpath{}~$J$ in $\widetilde\Gamma$.
  Thus,
  the weight~$\w(W_J)=\wtG(J)$ is at most
  \begin{align*}
            \wtG(J^*)
            &= \wtG(x_1,y_2^{2}) +\wtG(y_2^2,y_3^{3})+ \dots + \wtG(y_k^{k},x_{k+1})\\
            &\leq \alpha\wG(x_1,y_2^{2}) +\alpha\wG(y_2^2,y_3^{3})+ \dots + \alpha\wG(y_k^{k},x_{k+1})\\
    &= \alpha \wG(J^*) = \alpha \w(W^*).
  \end{align*}
  If the %
  described procedure
  fails to produce an $\alpha$\hyp approximate solution for~$I$,
  then, by contraposition,
  $\algA$~failed to produce an $\alpha$\hyp approximate
  solution for at least one \stRPP{} instance~$R[u,v,i]$
  corresponding to an arc~$(u_i,v_{i+1})$ on~$J^*$.
  Since $J^*$~has $k$~arcs,
  this happens with probability at most~$kp$
  by the union bound.
\end{proof}

\section{A 5/3-approximation algorithm for \hcpl}
\label{sec:apx}

\noindent
We now show a
polynomial\hyp time
5/3\hyp approximation algorithm for \stRPP{},
which, by \cref{thm:strppApprox},
carries over to \hcpl{}.
The algorithm is an adaption
of the \citeauthor{Chr76}\hyp \citeauthor{Ser78}\hyp like
3/2\hyp approximation algorithm
from RPP~\cite{EGL95,BNSW15}
to \stRPP{}.
It closely follows
\citeauthor{Hoo91}'s \citep{Hoo91} adaption of the
\citeauthor{Chr76}-\citeauthor{Ser78} 3/2\hyp approximation
algorithm 
from metric TSP \citep{Chr76,Ser78,BS20b} to metric \stTSP.

\begin{theorem}\label{thm:53stRPP}
  The \stRPP{} is $5/3$\hyp approximable in $O(n^3)$~time.
\end{theorem}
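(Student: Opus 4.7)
I would adapt the \citeauthor{Chr76}--\citeauthor{Ser78}-style $3/2$-approximation for closed \RPP{}~\cite{EGL95,BNSW15} to \stRPP{}, incorporating the parity modification of \citet{Hoo91} that turns this kind of algorithm into a $5/3$-approximation for $s$-$t$ variants. The algorithm proceeds in four stages. First, I replace $G$ by its metric closure via Floyd--Warshall in $O(n^3)$ time. Second, I form an auxiliary graph $H$ whose vertices are the connected components of $\sg{G}{R}$ together with $\{s\}$ and $\{t\}$ added as singleton components when they are not incident to any required edge, with edge weights equal to shortest-path distances between the corresponding components; I then compute a minimum spanning tree $F$ of $H$ and lift its edges back to shortest paths in $G$. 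Third, I let $O$ be the set of odd-degree vertices of $\sg{G}{R\cup F}$, set $T := O \triangle \{s,t\}$, and compute a minimum-weight $T$-join $J$---equivalently, a minimum-weight perfect matching on $T$ in the metric closure---in $O(n^3)$ time. Fourth, I take an Euler walk from $s$ to $t$ in $\sg{G}{R\cup F\cup J}$, which is connected and has exactly $s$ and $t$ as its odd-degree vertices, and shortcut it back to a walk in $G$.

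For the approximation ratio, let $W^*$ be an optimal \stRPP{} solution and set $\mathrm{OPT} := w(W^*)$. The walk $W^*$ traverses every edge of $R$, and the multiset $E(W^*)\setminus R$ both completes parities and connects all components of $\sg{G}{R}$ with $s$ and $t$, so it contains a spanning structure of $H$. This yields the easy bounds $w(R) \le \mathrm{OPT}$ and $w(F) \le \mathrm{OPT}-w(R)$, hence $w(R)+w(F) \le \mathrm{OPT}$. Feasibility is immediate from the Euler walk construction, and shortcutting in the metric closure can only decrease the weight.

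The main obstacle is the $T$-join bound $w(J) \le \tfrac{2}{3}\mathrm{OPT}$, the required-edge analogue of \citeauthor{Hoo91}'s matching bound. My plan is to exhibit three candidate $T$-joins whose total weight is at most $2\mathrm{OPT}$, so that by the minimality of $J$ we obtain $w(J) \le \tfrac{2}{3}\mathrm{OPT}$. Natural ingredients are (a)~the multiset $E(W^*)\setminus R$, which is an $(O_R\triangle\{s,t\})$-join of weight $\mathrm{OPT}-w(R)$, where $O_R$ is the odd-degree vertex set of $\sg{G}{R}$, (b)~the spanning structure $F$, which is an $O_F$-join, and (c)~a shortest $s$--$t$ path in $G$, of weight at most $\mathrm{OPT}$. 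Combining these via symmetric differences yields three $T$-joins; the careful accounting that bounds their total weight by $2\mathrm{OPT}$, in the spirit of~\cite{Hoo91} and using the minimality of $F$ together with $d(s,t)\le\mathrm{OPT}$, is where I expect the most technical effort.

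Putting the bounds together, the output has weight $w(R)+w(F)+w(J) \le \mathrm{OPT}+\tfrac{2}{3}\mathrm{OPT} = \tfrac{5}{3}\mathrm{OPT}$. The overall running time is dominated by the $O(n^3)$ cost of Floyd--Warshall and of minimum-weight perfect matching, yielding \cref{thm:53stRPP}.
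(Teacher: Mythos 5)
Your algorithm coincides with the paper's (metric closure, a minimum\hyp weight connector $F$ of the components of $\sg{G}{R}$ together with $s$ and $t$, a minimum\hyp weight perfect matching on the wrong\hyp parity set $T=O\triangle\{s,t\}$, and an Euler walk), and the bounds $\w(R)+\w(F)\le\mathrm{OPT}$ as well as feasibility are fine. The gap is exactly where you flag it: the bound $\w(J)\le\frac{2}{3}\mathrm{OPT}$ is not proved, and the route you sketch cannot deliver it. A parity check shows that among the symmetric\hyp difference combinations of your three ingredients --- $A:=E(W^*)\setminus R$ with odd\hyp degree set $O_R\triangle\{s,t\}$, the connector $F$ with odd\hyp degree set $O_F$, and a shortest $s$-$t$-path with odd\hyp degree set $\{s,t\}$ --- only $A\triangle F$ has odd\hyp degree set $T=O_R\triangle O_F\triangle\{s,t\}$. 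So these ingredients yield only \emph{one} $T$-join, of weight at most $2(\mathrm{OPT}-\w(R))$, which gives $5/3$ only when $\w(R)\ge\frac{2}{3}\mathrm{OPT}$; the shortest $s$-$t$-path has the wrong parity to enter a second or third $T$-join, and is not part of the standard analysis.

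The missing idea (due to \citet{Hoo91}, and the one the paper uses) is to decompose a single multigraph rather than to combine global objects. Set $Q:=E(W^*)\uplus R\uplus F$; then $\w(Q)\le 2\,\mathrm{OPT}$ and the odd\hyp degree vertices of $\sg{G}{Q}$ are exactly $T$. Number the vertices of $T$ by their first occurrence along $W^*$ and let $E_1$ be the multiset union of the subwalks of $W^*$ between the $(2i-1)$-st and the $2i$-th such vertex; shortcutting these subwalks gives one perfect matching $M_1$ on $T$ with $\w(M_1)\le\w(E_1)$. The remainder $Q\setminus E_1$ is balanced and still connected (it contains $R\uplus F$), so its Euler tour shortcuts to a simple cycle on $T$, which splits into two further perfect matchings $M_2,M_3$ with $\w(M_2)+\w(M_3)\le\w(Q\setminus E_1)$. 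Hence $3\,\w(J)\le\w(M_1)+\w(M_2)+\w(M_3)\le\w(Q)\le 2\,\mathrm{OPT}$. Without this segment decomposition of the optimal walk, your argument does not close.
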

\begin{proof} %
  We assume $s\ne t$
  (otherwise,
  one can add a dummy vertex $s\ne t$
  and an edge~$\{s,t\}$
  of zero weight to the initial graph).
  We only show the 5/3\hyp approximation algorithm
  for \stRPP{} instances~$I:=(G, R, \w, s,t)$
  such that $G=(V,E)$~is a complete
  graph on the vertex set~$V=V(R)\cup\{s,t\}$ and
  such that the weight function~$\w$ satisfies
  the triangle inequality.
  This is enough,
  since the general
  case reduces to this special
  case in $O(n^3)$~time and any
  $\alpha$\hyp approximation for the special case
  yields an $\alpha$\hyp approximation
  for the general case \citep{BNSW15}.
  The 5/3\hyp approximation algorithm works in four steps.
  
  \begin{inparaenum}[\em Step 1.]
  \item 
    Compute a set~$T\subseteq E$ of edges
    of minimum total weight such that
    $R\cup T$ forms a spanning connected subgraph of~$G$ %
    (for example, using \citeauthor{Kru56}'s algorithm
    \citep{Kru56}).

  \item  %
    Let $S\subseteq V$~be the set of
    vertices in~$V\setminus\{s,t\}$
    that are imbalanced in~$\sg{G}{R\cup T}$
    and of those vertices in~$\{s,t\}$
    that are balanced in~$\sg{G}{R\cup T}$.
    Note that $|S|$~is even:
    Indeed, consider the set~$S'\subseteq V$
    of all vertices that are imbalanced in~$\sg{G}{R\cup T}$.
    Clearly, $|S'|$ is even.
    Now, if $s,t \in S'$, then $S=S'\setminus \{s,t\}$.
    If $s,t \notin S'$, then $S=S'\cup \{s,t\}$.
    If $s\in S'$ and $t \notin S'$ (or vice versa),
    then $S = S'\cup \{t\} \setminus  \{s\}$ (or $S = S'\cup \{s\} \setminus  \{t\}$).
    Thus, $|S|$~is even.

  \item  %
    Construct a minimum\hyp weight
    perfect matching~$M\subseteq E$
    on the vertices of~$S$ in~$G$
    (for example, using \citeauthor{Law76}'s
    algorithm \citep[Section~6.10]{Law76}).
    
  \item Return an Euler walk~$P$
    in $\sg{G}{R\uplus T\uplus M}$.
    Note that~$P$ exists
    (and can be computed using
    Hierholtzer's algorithm \citep{HW73,Fle91})
    since
    $\sg{G}{R\uplus T\uplus M}$~is
    connected and all its vertices
    except for~$s$ and~$t$
    are balanced.
    Thus,
    the endpoints of~$P$
    are~$s$ and~$t$
    and $P$~is a feasible solution to~$I$.
  \end{inparaenum}
  
  All steps can be carried out in $O(n^3)$~time.
  It remains to prove
  that $P$~is a 5/3\hyp approximation.
  To this end,
  let $P^*$~be an optimal solution for~$I$.
  Obviously, $\w(R\cup T) \le \w(P^*)$.
  Thus,
  it remains to show $\w(M) \le 2/3\cdot \w(P^*)$.  
  To this end,
  consider~$Q = E(P^*) \uplus R \uplus T$.
  We will construct three
  perfect matchings~$M_1$, $M_2$, and~$M_3$ on~$S$ in~$G$
  such that~$\w(M_1)+\w(M_2)+\w(M_3)\leq\w(Q)$,
  and thus $\w(M) \le 1/3\cdot \w(Q) \le 2/3\cdot \w(P^*)$.

  \looseness=-1
  Since the imbalanced vertices of~$\sg{G}{P^*}$
  are exactly~$s$ and~$t$,
  the imbalanced vertices in $\sg{G}{Q}$
  are exactly those in the set~$S$.
  Let the vertices of $S=\{v_1,v_2,\dots,v_{2\ell}\}$
  be numbered in the order of their first
  occurrence on~$P^*$
  and let $P_i^*$~be the subwalk of~$P^*$
  between the vertices~$v_{2i-1}\in S$ and~$v_{2i} \in S$
  for all $i\in\{1, \dots, \ell\}$.
  Let \[
    E_1:=\biguplus_{i=1}^\ell E(P_i^*).
  \]
  By shortcutting each path~$P_i^*$ to one edge,
  one gets a perfect matching~$M_1$
  on the vertices of~$S$
  such that $\w(M_1)\leq \w(E_1)$.

  The subgraph~$\sg{G}{Q \setminus E_1}$ is  Eulerian: %
  it is connected
  since $  R \uplus T \subseteq Q \setminus E_1$
  and it is balanced since
  the imbalanced vertices of~$\sg{G}{E_1}$
  are exactly those of~$\sg{G}{Q}$, that is, $S$.
  Its Euler cycle
  can be shortcut to a simple cycle on~$S$,
  which can be partitioned into two
  perfect matchings~$M_2$ and~$M_3$ on~$S$. 
  Thus,
  \begin{align*}
    \w(P) &= \w(R\cup T) + \w(M) \\
    &\le \w(P^*) + (\w(M_1) + \w(M_2) +\w(M_3) )/3 \\ %
    &\le \w(P^*) + (\w(E_1) + \w(Q\setminus E_1))/3 \\
    &\leq \w(P^*) + \w(Q)/3 \le  5/3 \cdot \w(P^*), %
  \end{align*}
  where the second inequality is due to the metric weights $\w$.\end{proof}
\noindent
Plugging \cref{thm:53stRPP}
into \cref{thm:strppApprox},
we immediately get:
\begin{corollary}
  \hcpl{} is $5/3$\hyp approximable in $O(kn^5)$~time.
\end{corollary}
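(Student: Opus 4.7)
The plan is to obtain the corollary as a direct instantiation of \cref{thm:strppApprox} using the \stRPP{} algorithm from \cref{thm:53stRPP}. Concretely, I would feed \cref{thm:strppApprox} the deterministic algorithm of \cref{thm:53stRPP} as the subroutine~$\algA$, so that the parameters become $\alpha=5/3$, $\rpptime=O(n^3)$, and $p=0$.

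With these parameters, \cref{thm:strppApprox} yields an algorithm that, given an \hcpl{} instance, returns a $5/3$\hyp approximate solution with probability at least~$1-pk=1$, that is, deterministically. The running time promised by \cref{thm:strppApprox} is
\[
O(kn^2\rpptime + kn^3) = O(kn^2\cdot n^3 + kn^3) = O(kn^5),
\]
as claimed in the corollary. So the whole proof collapses to verifying that the subroutine of \cref{thm:53stRPP} is applicable to each of the \stRPP{} instances~$R[u,v,i]$ arising in \cref{constr:Gstar}; this is immediate since \cref{thm:53stRPP} imposes no connectivity or structural restriction on the set of required edges.

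There is no real obstacle here: \cref{thm:strppApprox} was set up precisely to make this kind of plug\hyp in possible, and the error\hyp probability bound $pk$ becomes trivial because our subroutine is deterministic. The only thing worth remarking on in the write\hyp up is that the $O(kn^5)$ bound now matches the running time that \citet{DST87} obtained for the easier case \hcpcl{}, even though here the classes in \hcpl{} may be disconnected; the price we pay is giving up optimality in exchange for the factor~$5/3$. Given the brevity, I would present the corollary with a one\hyp sentence proof that simply combines \cref{thm:53stRPP} and \cref{thm:strppApprox}.
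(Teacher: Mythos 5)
Your proposal is correct and is exactly the paper's argument: the paper derives the corollary by ``plugging'' \cref{thm:53stRPP} into \cref{thm:strppApprox} with $\alpha=5/3$, $\rpptime=O(n^3)$, and $p=0$, giving the $O(kn^2\cdot n^3 + kn^3)=O(kn^5)$ bound and a deterministic guarantee. Nothing further is needed.
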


\section{Parameterized algorithms for \hcpl}
\label{sec:fpt}

\noindent
\cref{thm:strppApprox} allows us to easily
transfer well\hyp known parameterized algorithms
from \RPP{} to \hcpl{} to show:

\begin{theorem}
  Let $\w_{\max}$~be the maximum edge weight
  and $c$~be the maximum number of connected components
  in any edge class of an \hcpl{} instance.
  Then, \hcpl{} is
  \begin{compactenum}[i)]
  \item polynomial\hyp time solvable for constant~$c$ and
    
  \item solvable in $2^c\cdot\poly(\w_{\max},n)$~time
    with exponentially decreasing error probability.
  \end{compactenum}
\end{theorem}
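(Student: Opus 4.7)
The plan is to apply \cref{thm:strppApprox} with approximation factor $\alpha=1$, using known exact algorithms for \stRPP{} whose complexity is controlled by the number of connected components of the required\hyp edge subgraph. The crucial observation is that every subproblem~$R[u,v,i]$ generated by \cref{constr:Gstar} has required\hyp edge set exactly~$E_i$, whose induced subgraph has at most~$c$ connected components by hypothesis; hence any algorithm parameterized by this quantity is immediately available for each of the $O(kn^2)$ subproblems.

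For~(i), I would invoke the algorithm of \citet{EGL95} that solves \RPP{} in $f(c)\cdot\poly(n)$ time when the required edges form at most $c$ connected components, and port it to \stRPP{} in standard fashion (as in the proof of \cref{thm:53stRPP}: augment~$G$ with a zero\hyp weight edge~$\{s,t\}$, mark it required, and cut the returned closed walk at this edge). For constant~$c$, this yields $\rpptime=\poly(n)$ and $p=0$, so \cref{thm:strppApprox} produces a deterministic polynomial\hyp time algorithm for \hcpl{}.

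For~(ii), I would invoke the randomized algorithm of \citet{GWY17} (building on \citet{BNSW15}), which solves \RPP{} in $2^c\cdot\poly(\w_{\max},n)$ time with constant success probability, again adapted to \stRPP{} by the same zero\hyp edge trick. Standard amplification using $O(n)$ independent repetitions drives the per\hyp instance error to $p\le 2^{-n}$ while preserving the $2^c\cdot\poly(\w_{\max},n)$ bound. By \cref{thm:strppApprox} the total error is then at most $kp\le n^2\cdot 2^{-n}$, still exponentially small in~$n$, and the running time becomes $2^c\cdot\poly(\w_{\max},n)$ since the $O(kn^2 \rpptime + kn^3)$ overhead in \cref{thm:strppApprox} is polynomial.

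The main obstacle I anticipate is verifying that the cited \RPP{} algorithms transfer to \stRPP{} with the same resource guarantees and the same connected\hyp component parameter. The zero\hyp weight edge reduction raises the number of connected components of the required subgraph by at most one, so any parameter blow\hyp up is absorbed in the $2^c$ factor; the polynomial dependencies on~$n$ and~$\w_{\max}$ are untouched; and since we ask for an exact optimum, the returned closed walk must traverse the cheap edge at least once and can be split there to yield an optimal $s$\hyp $t$ walk in the original instance.
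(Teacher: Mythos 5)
Your overall architecture is exactly the paper's: observe that every subproblem $R[u,v,i]$ produced by \cref{constr:Gstar} has required\hyp edge set $E_i$ with at most $c$ components, port the component\hyp parameterized \RPP{} algorithms to \stRPP{} by adding a required edge $\{s,t\}$, and plug the resulting exact solvers into \cref{thm:strppApprox} with $\alpha=1$ (and amplified error probability for the randomized case). However, the reduction as you state it is broken: the added required edge $\{s,t\}$ must be \emph{heavy}, not of weight zero. With a zero\hyp weight required edge, nothing prevents an optimal closed walk from traversing $\{s,t\}$ several times as a free shortcut, so the optimum of the \RPP{} instance can be strictly smaller than the optimum of the \stRPP{} instance, and cutting the returned closed walk at one traversal of $\{s,t\}$ can leave a walk that still uses the nonexistent edge. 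Concretely, take the $4$\hyp cycle $s,a,t,b,s$ with $\w(\{s,a\})=\w(\{t,b\})=1$, $\w(\{a,t\})=\w(\{b,s\})=100$, and $R=\{\{s,a\},\{t,b\}\}$: the optimal $s$\hyp $t$ walk has weight $103$, but after adding a free required edge $\{s,t\}$ the closed walk $(s,a,s,t,b,t,s)$ using that edge twice has weight $4$. (Your appeal to the proof of \cref{thm:53stRPP} conflates two different tricks: there the zero\hyp weight edge is only used to split a coincident pair $s=t$ into distinct vertices, and in the present theorem's proof the zero\hyp weight edge is pendant, attached to a fresh degree\hyp one source $s'$, so it cannot serve as a shortcut.)

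The fix is the one the paper uses: give the added required edge $\{s,t\}$ weight $2\w(E)$. Then an optimal closed walk traverses it exactly once---a second traversal could be replaced by any $s$\hyp $t$ path in $G$, of weight at most $\w(E)<2\w(E)$, contradicting optimality---so deleting that single traversal yields an $s$\hyp $t$ walk in the original graph, and optima translate exactly (with an additive shift of $2\w(E)$); note that approximation factors do \emph{not} survive this reduction, but here you only need $\alpha=1$. Everything else in your argument (the component count increasing by at most one, absorbing it into the $2^c$ factor, the union bound over the $k$ arcs of the \Gpath{}) is sound and matches the paper.
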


\begin{proof}
  To prove the theorem,
  it is enough to show that the known \RPP{} algorithms
  can also be used for~\stRPP{}.
  To this end,
  we reduce \stRPP{} to~\RPP{}.
  We assume that $s\ne t$
  and that~$s$ and~$t$ are non-adjacent
  in \stRPP{} instances
  (otherwise, we can add a new source~$s'$
  and a required weight\hyp zero edge~$\{s',s\}$).
  
  Now,
  note that an \stRPP{} instance~$I:=(G,R,\w,s,t)$
  can be reduced to an \RPP{} instance~$I':=(G',R',\w')$
  where $I'$~is obtained from~$I$
  by adding an edge~$\{s,t\}$ of weight $2\w(E)$
  to both~$E$ and~$R$.
  Then,
  an optimal solution~$P$ for~$I$
  yields a solution of weight~$\w(P)+2\w(E)$ for~$I'$.
  Moreover,
  an optimal solution~$P$ for~$I'$
  uses the edge~$\{s,t\}$ exactly once:
  if $P$~traversed it multiple times,
  then it would be cheaper
  to replace the second traversal of~$\{s,t\}$
  by any other $s$-$t$-path in~$G$.
  Thus,
  $P$~can be turned into a solution
  of weight~$\w(P)-2\w(E)$ for~$I$.
  That is,
  optimal solutions translate between~$I$ and~$I'$
  (yet approximate solutions do not).
  
  Moreover,
  if the number of connected components in~$\sg{G}{R}$
  is~$c'$,
  then the number of connected components in~$\sg{G'}{R'}$
  is at most~$c'+1$.
  Thus,
  since \RPP{} is solvable in polynomial time for constant~$c'$ \citep{Fre77,BNSW15}, so is \stRPP{}.
  And since \RPP{} is solvable in $2^{c'}\cdot\poly(\w_{\max},n)$ with
  exponentially decreasing
  error probability~\citep{GWY17},
  so is \stRPP{}.
  To conclude the proof of the theorem,
  it is enough to apply \cref{thm:strppApprox}
  and to observe
  that,
  for any \stRPP{} instance~$P[u,v,i]$
  solved,
  the subgraph~$\sg{G}{E_i}$ induced
  by the required edges~$E_i$
  has at most $c$~connected components.
\end{proof}

\section{Conclusion}
\noindent
Our work leaves open several questions.
First,
what is the computational complexity of \hcpc{}
with a constant number of edge classes?
It has been conjectured to be polynomial\hyp time
solvable \citep{BNSW15},
yet no polynomial\hyp time algorithm
is known even for the case with three classes.

Second,
can one close the gap between our 5/3\hyp approximation for \stRPP{}
and the known 3/2\hyp approximation for \RPP{}~\cite{EGL95,BNSW15}?
For example,
recently,
a 3/2\hyp approximation for metric \stTSP{}
has been shown \citep{Zen19},
matching the approximation factor
of the
Christofides\hyp Serdyukov algorithm
for metric TSP \citep{Chr76,Ser78,BS20b}.
It is not obvious whether the used approaches
carry over to \stRPP{},
yet closing the gap
between \stRPP{} and \RPP{}
would immediately  give a 3/2\hyp approximation for \hcpl{}.

Our fixed\hyp parameter algorithm
for \hcpl{} parameterized
by the maximum number~$c$ of connected components
in any edge class
raises the question
whether and how lossy kernelization results
for RPP parameterized by~$c$ \citep{BFT20b} 
carry over to \hcpl{}.

\paragraph{Funding}
This work is funded by  Mathematical Center in Aka\-demgorodok,
agreement No.\ 075-15-2019-1675 with the Ministry of Science and
Higher Education of the Russian Federation.

\bibliographystyle{hs-lintimespace}
\bibliography{hcpp}

\end{document}